\newif\ifdraft  \draftfalse   
\setlist{nolistsep}
\newcommandx{\newtheoremy}[3][2={}]{
  \ifthenelse{\equal{#2}{}}{
    \ifcsmacro{#1}{}{\newtheorem{#1}{#3}}
  }{
    \ifcsmacro{#1}{}{\newtheorem{#1}[#2]{#3}}
  }
}
\newcommand{\thmBlockFont}[1]{\sc{#1}}
\newcommand{\RefSeparator}{.}
\newcommand{\generalref}[2]{\ref{#1\RefSeparator#2}}
\newcommand{\generalpageref}[2]{\pageref{#1\RefSeparator#2}}
\newcommand{\generallabel}[2]{\label{#1\RefSeparator#2}}
\newcommand{\PageName}{page}
\newcommand{\AlgorithmName}{Algorithm}
\newcommand{\AlgorithmRefName}{Algorithm}
\newcommand{\AlgorithmRefPrefix}{a}
\newcommand*{\ralgorithm}{\@ifstar{\generalref{\AlgorithmRefPrefix}}{\AlgorithmRefName~\ralgorithm*}}
\newcommand*{\palgorithm}{\@ifstar{\generalpageref{\AlgorithmRefPrefix}}{\PageName~\palgorithm*}}
\newcommand{\CorollaryName}{Corollary}
\newcommand{\CorollaryRefName}{Corollary}
\newcommand{\CorollaryRefPrefix}{c}
\newcommand*{\rcorollary}{\@ifstar{\generalref{\CorollaryRefPrefix}}{\CorollaryRefName~\rcorollary*}}
\newcommand*{\pcorollary}{\@ifstar{\generalpageref{\CorollaryRefPrefix}}{\PageName~\pcorollary*}}
\newcommand{\ConjectureName}{Conjecture}
\newcommand{\ConjectureRefName}{Conjecture}
\newcommand{\ConjectureRefPrefix}{cj}
\newcommand*{\rconjecture}{\@ifstar{\generalref{\ConjectureRefPrefix}}{\ConjectureRefName~\rconjecture*}}
\newcommand*{\pconjecture}{\@ifstar{\generalpageref{\ConjectureRefPrefix}}{\PageName~\pconjecture*}}
\newcommand{\DefinitionName}{Definition}
\newcommand{\DefinitionRefName}{Definition}
\newcommand{\DefinitionRefPrefix}{d}
\newcommand{\ldefinition}[1]{\generallabel{\DefinitionRefPrefix}{#1}}
\newcommand*{\rdefinition}{\@ifstar{\generalref{\DefinitionRefPrefix}}{\DefinitionRefName~\rdefinition*}}
\newcommand*{\pdefinition}{\@ifstar{\generalpageref{\DefinitionRefPrefix}}{\PageName~\pdefinition*}}
\newcommand{\ExampleName}{Example}
\newcommand{\ExampleRefName}{Example}
\newcommand{\ExampleRefPrefix}{e}
\newcommand{\lexample}[1]{\generallabel{\ExampleRefPrefix}{#1}}
\newcommand*{\rexample}{\@ifstar{\generalref{\ExampleRefPrefix}}{\ExampleRefName~\rexample*}}
\newcommand*{\pexample}{\@ifstar{\generalpageref{\ExampleRefPrefix}}{\PageName~\pexample*}}
\newcommand{\LemmaName}{Lemma}
\newcommand{\LemmaRefName}{Lemma}
\newcommand{\LemmaRefPrefix}{l}
\newcommand{\llemma}[1]{\generallabel{\LemmaRefPrefix}{#1}}
\newcommand*{\rlemma}{\@ifstar{\generalref{\LemmaRefPrefix}}{\LemmaRefName~\rlemma*}}
\newcommand*{\plemmam}{\@ifstar{\generalpageref{\LemmaRefPrefix}}{\PageName~\plemma*}}
\newcommand{\PropositionName}{Proposition}
\newcommand{\PropositionRefName}{Proposition}
\newcommand{\PropositionRefPrefix}{p}
\newcommand{\lproposition}[1]{\generallabel{\PropositionRefPrefix}{#1}}
\newcommand*{\rproposition}{\@ifstar{\generalref{\PropositionRefPrefix}}{\PropositionRefName~\rproposition*}}
\newcommand*{\pproposition}{\@ifstar{\generalpageref{\PropositionRefPrefix}}{\PageName~\pproposition*}}
\newcommand{\PropertyName}{Property}
\newcommand{\PropertyRefName}{Property}
\newcommand{\PropertyRefPrefix}{pp}
\newcommand*{\rproperty}{\@ifstar{\generalref{\PropertyRefPrefix}}{\PropertyRefName~\rproperty*}}
\newcommand*{\pproperty}{\@ifstar{\generalpageref{\PropertyRefPrefix}}{\PageName~\pproperty*}}
\newcommand{\QuestionName}{Question}
\newcommand{\QuestionRefName}{Question}
\newcommand{\QuestionRefPrefix}{q}
\newcommand*{\rquestion}{\@ifstar{\generalref{\QuestionRefPrefix}}{\QuestionRefName~\rquestion*}}
\newcommand*{\pquestion}{\@ifstar{\generalpageref{\QuestionRefPrefix}}{\PageName~\pquestion*}}
\newcommand{\RemarkName}{Remark}
\newcommand{\RemarkRefName}{Remark}
\newcommand{\RemarkRefPrefix}{r}
\newcommand*{\rremark}{\@ifstar{\generalref{\RemarkRefPrefix}}{\RemarkRefName~\rremark*}}
\newcommand*{\premark}{\@ifstar{\generalpageref{\RemarkRefPrefix}}{\PageName~\premark*}}
\newcommand{\NotationName}{Notation}
\newcommand{\NotationRefName}{Notation}
\newcommand{\NotationRefPrefix}{n}
\newcommand*{\rnotation}{\@ifstar{\generalref{\NotationRefPrefix}}{\NotationRefName~\rnotation*}}
\newcommand*{\pnotation}{\@ifstar{\generalpageref{\NotationRefPrefix}}{\PageName~\pnotation*}}
\newcommand{\TheoremName}{Theorem}
\newcommand{\TheoremRefName}{Theorem}
\newcommand{\TheoremRefPrefix}{t}
\newcommand{\ltheorem}[1]{\generallabel{\TheoremRefPrefix}{#1}}
\newcommand*{\rtheorem}{\@ifstar{\generalref{\TheoremRefPrefix}}{\TheoremRefName~\rtheorem*}}
\newcommand*{\ptheorem}{\@ifstar{\generalpageref{\TheoremRefPrefix}}{\PageName~\ptheorem*}}
\newcommand{\FigureRefName}{Figure}
\newcommand{\FigureRefPrefix}{f}
\newcommand{\lfigure}[1]{\generallabel{\FigureRefPrefix}{#1}}
\newcommand*{\rfigure}{\@ifstar{\generalref{\FigureRefPrefix}}{\FigureRefName~\rfigure*}}
\newcommand*{\pfigure}{\@ifstar{\generalpageref{\FigureRefPrefix}}{\PageName~\pfigure*}}
\newcommand{\EquationRefName}{Equation}
\newcommand{\EquationRefPrefix}{eq}
\newcommand*{\requation}{\@ifstar{\generalref{\EquationRefPrefix}}{\EquationRefName~\requation*}}
\newcommand*{\pequation}{\@ifstar{\generalpageref{\EquationRefPrefix}}{\PageName~\pequation*}}
\newcommand{\SectionRefName}{Section}
\newcommand{\SectionRefPrefix}{s}
\newcommand{\lsection}[1]{\generallabel{\SectionRefPrefix}{#1}}
\newcommand*{\rsection}{\@ifstar{\generalref{\SectionRefPrefix}}{\SectionRefName~\rsection*}}
\newcommand*{\psection}{\@ifstar{\generalpageref{\SectionRefPrefix}}{\PageName~\psection*}}
\newcommand{\ProblemName}{Problem}
\newcommand{\ProblemRefName}{Problem}
\newcommand{\ProblemRefPrefix}{pb}
\newcommand{\lproblem}[1]{\generallabel{\ProblemRefPrefix}{#1}}
\newcommand*{\rproblem}{\@ifstar{\generalref{\ProblemRefPrefix}}{\ProblemRefName~\rproblem*}}
\newcommand*{\pproblem}{\@ifstar{\generalpageref{\ProblemRefPrefix}}{\PageName~\pproblem*}}
\newcommand{\vmrefprefix}[1]{%
  \ifthenelse{\equal{#1}{corollary}}{\CorollaryRefPrefix}{}%
  \ifthenelse{\equal{#1}{definition}}{\DefinitionRefPrefix}{}%
  \ifthenelse{\equal{#1}{example}}{\ExampleRefPrefix}{}%
  \ifthenelse{\equal{#1}{lemma}}{\LemmaRefPrefix}{}%
  \ifthenelse{\equal{#1}{proposition}}{\PropositionRefPrefix}{}%
  \ifthenelse{\equal{#1}{property}}{\PropertyRefPrefix}{}%
  \ifthenelse{\equal{#1}{question}}{\QuestionRefPrefix}{}%
  \ifthenelse{\equal{#1}{remark}}{\RemarkRefPrefix}{}%
  \ifthenelse{\equal{#1}{notation}}{\NotationRefPrefix}{}%
  \ifthenelse{\equal{#1}{theorem}}{\TheoremRefPrefix}{}%
  \ifthenelse{\equal{#1}{figure}}{\FigureRefPrefix}{}%
  \ifthenelse{\equal{#1}{equation}}{\EquationRefPrefix}{}%
  \ifthenelse{\equal{#1}{section}}{\SectionRefPrefix}{}%
}
\newcommand{\vmrefname}[1]{
  \ifthenelse{\equal{#1}{corollary}}{\CorollaryRefName}{}%
  \ifthenelse{\equal{#1}{definition}}{\DefinitionRefName}{}%
  \ifthenelse{\equal{#1}{example}}{\ExampleRefName}{}%
  \ifthenelse{\equal{#1}{lemma}}{\LemmaRefName}{}%
  \ifthenelse{\equal{#1}{proposition}}{\PropositionRefName}{}%
  \ifthenelse{\equal{#1}{property}}{\PropertyRefName}{}%
  \ifthenelse{\equal{#1}{question}}{\QuestionRefName}{}%
  \ifthenelse{\equal{#1}{remark}}{\RemarkRefName}{}%
  \ifthenelse{\equal{#1}{notation}}{\NotationRefName}{}%
  \ifthenelse{\equal{#1}{theorem}}{\TheoremRefName}{}%
  \ifthenelse{\equal{#1}{figure}}{\FigureRefName}{}%
  \ifthenelse{\equal{#1}{equation}}{\EquationRefName}{}%
  \ifthenelse{\equal{#1}{section}}{\SectionRefName}{}%
}
\def\Vhrulefill{\leavevmode\leaders\hrule height 0.7ex depth \dimexpr0.4pt-0.7ex\hfill\kern0pt}
\newcommand{\runningnotice}[1]{
  \AtBeginShipout{%
    \AtBeginShipoutUpperLeft{%
      \put(\dimexpr\paperwidth-6pt\relax,-\paperheight+2.6cm){%
        \rotatebox{90}{\makebox[\paperheight-3.5cm][r]{%
          \Vhrulefill\normalfont\ttfamily\normalsize~#1~\raisebox{0.7ex}{\rule{1cm}{0.4pt}}%
        }}%
      }%
    }
  }
}
\def\jscompatibility{0}
\def\curlang{en}
\newcommandx{\vmnewcommandx}[5][2=0,3={},5={},usedefault]{
      \ifthenelse{\equal{\jscompatibility}{0}}
      {\newcommandx{#1}[#2][#3]{#4}} 
      {\newcommandx{#1}[#2][#3]{#5}} 
}
\vmnewcommandx{\wlen}[1]{|#1|}
\vmnewcommandx{\cod}[1]{\langle #1 \rangle}
\vmnewcommandx{\floor}[1]{\lfloor #1 \rfloor}
\vmnewcommandx{\ceil}[1]{\lceil #1 \rceil}
\newcommandx{\newcommandy}[5][1=i,3=0,4={}]{%
  \ifthenelse{\isundefined{#2}}{\newcommandx{#2}[#3][#4]{#5}}{%
      \ifthenelse{\equal{#1}{i}}{}{}%
      \ifthenelse{\equal{#1}{o}}{\renewcommandx{#2}[#3][#4]{#5}}{}%
    }%
}
\newcommand{\val}[1]{\widebar{#1}}
\newcommand{\strong}[1]{\textbf{#1}}
\newcommand{\ssc}[1]{\textbf{\textsc{#1}}}
\newcommand{\set}[1]{\{#1\}}
\newcommand{\N}{\mathbb{N}}
\newcommand{\Q}{\mathbb{Q}}
\newcommand{\nlb}{\nolinebreak}
\renewcommand{\thmBlockFont}[1]{\ssc{#1}}
\newcommand{\frcurrentcaption}{}
\newcommand{\encurrentcaption}{}
\newcommandy{\vmfigure}[2]{
  \begin{figure}[ht!]
    \centering
    \input{#1/#2}
    \ifthenelse{\equal{\curlang}{fr}}
    {\caption{\frcurrentcaption}}{}
    \ifthenelse{\equal{\curlang}{en}}
    {\caption{\encurrentcaption}}{}%
    
    \lfigure{#2}
  \end{figure}
}
\newcommand{\vmStartTrickI}[3]{#1{#3}#2}
\newcommand{\vmStarTrickII}[4]{#1{#3}{#4}#2}
\newcommand*{\addStartTextModeZ}[3][i]{ 
  \newcommandy[#1]{#2}{\protect\@ifstar{\leavevmode\protect\nlb$\protect#2$}{\protect#3}}
}
\newcommand*{\addStartTextModeI}[2]{
  \newcommand{#1}{\@ifstar{\leavevmode\nlb$\vmStartTrickI{#1}{$}}{#2}}
}
\newcommand*{\addStartTextModeII}[2]{
  \newcommand{#1}{\@ifstar{\leavevmode\nlb$\vmStarTrickII{#1}{$}}{#2}}
}
\newcommand*{\addStarTextModeIII}[2]{
  \newcommand{#1}{\@ifstar{\leavevmode\nlb$\vmStarTrickIII{#1}{$}}{#2}}
} 
\newcommand*{\addMagicMathModeZ}[3][i]{
  \newcommandy[#1]{#2}{{\ifmmode #3 \else \leavevmode\protect\nlb$\protect#3$\fi}}
}
\renewcommand{\leq}{\leqslant}
\renewcommand{\geq}{\geqslant}
\renewcommand{\phi}{\varphi}
\renewcommand{\epsilon}{\varepsilon}
\renewcommand{\mod}{\text{~mod~}}
\newcommandx{\base}[2][1=p,2=q,usedefault]{\frac{#1}{#2}}
\newcommandx{\values}[1][1=\base]{V_{#1}}
\newcommandx{\natlang}[1][1=\base]{L_{#1}}
\newcommandx{\nattree}[1][1=\base]{T_{#1}}
\newcommandx{\addconst}[2][1=\base]{\mathcal{R}_{#1,#2}}
\newcommandx{\converter}[2][1=\base]{\mathcal{C}_{#1,#2}}
\renewcommand{\val}[1]{\pi\hspace*{-0.1em}\left(#1\right)}
\newcommand{\Pref}[1]{\text{Pref}\xmd(#1)}
\newcommand{\Apq}[1][p]{A_{#1}}
\newcommand{\Apqe}[1][p]{{A_{#1}}^{\!\!*}}
\newcommand{\pqRep}[1]{\langle #1 \rangle_{\frac{p}{q}}}
\newcommand{\THFracs}[2]{\frac{#1}{#2}}%
\newcommand{\pq}{\THFrac{p}{q}}
\newcommand{\pqs}{\THFracs{p}{q}}
\newcommand{\Indpq}[1]{#1_{\pqs}}%
\newcommand{\tds}{\THFracs{3}{2}}
\newcommand{\Lpq}{\Indpq{L}}
\newcommand{\Vpq}{\Indpq{V}}
\newcommandx{\yaHelper}[2][1=\empty]{%
\ifthenelse{\equal{#1}{\empty}}%
  { \ensuremath{\scriptstyle{#2}}} 
  { \raisebox{ #1 }[0pt][0pt]{\ensuremath{\scriptstyle{#2}}}}  
}
\newcommandx{\yrightarrow}[4][1=\empty, 2=\empty, 4=\empty, usedefault=@]{%
  \ifthenelse{\equal{#2}{\empty}}
  { \xrightarrow{ \protect{ \yaHelper[#4]{#3} } } } 
  { \xrightarrow[ \protect{ \yaHelper[#2]{#1} } ]{ \protect{ \yaHelper[#4]{#3} } } } 
}
\newcommand{\fa}{\forall}
\newcommand{\e}{\text{\quad}}                 
\newcommand{\ee}{\text{\qquad}}               
\newcommand{\eee}{\text{\qquad \qquad}} 
\newsavebox{\InterSymbolSpace}
\savebox{\InterSymbolSpace}{\hspace{0.125em}}
\newsavebox{\SideFormulaSpace}
\savebox{\SideFormulaSpace}{\hspace{0.2em}}
\newcommand{\msp}{\usebox{\SideFormulaSpace}} 
\newcommand{\xmd}{\usebox{\InterSymbolSpace}} 
\newcommand{\eqpnt}{\makebox[0pt][l]{\: .}}
\newcommand{\eqvrg}{\makebox[0pt][l]{\: ,}}
\newcommand{\EqVrgInt}{\: , \e }
\newcommand{\quantvrg}{\, , \;}
\newcommand{\quantsp}{\ee }
\newcommand{\LatinLocution}[1]{{\itshape #1}\xspace}
\newcommand{\cf}{\LatinLocution{cf.}}
\newcommand{\ie}{{that is, }}
\newcommand{\UNmbb}{{\mathchoice
{\hbox{$\textstyle\rm 1\kern-0.2em I$}}%
{\hbox{$\textstyle\rm 1\kern-0.2em I$}}%
{\hbox{$\scriptstyle\rm 1\kern-0.15em I$}}%
{\hbox{$\scriptscriptstyle\rm 1\kern-0.1em I$}}%
}}
\newcommand{\Tc}{\mathcal{T}}
\newlength{\ArrowDiagSize}
\newlength{\ArrowDiagWidth}
\newenvironment{SLDiag}%
   {\psset{style=SLDiagStyle}\begin{psmatrix}}%
   {\end{psmatrix}}%
\newcommand{\CDSL}{\begin{SLDiag}}
\newcommand{\CDSLF}{\end{SLDiag}}
\newenvironment{DiagraBig}%
{\psmatrix[colsep=7ex,rowsep=6ex,arrows=->,nodesep=1ex,npos=.45]}%
{\endpsmatrix}
\newcommand{\CDB}{\begin{DiagraBig}}
\newcommand{\CDBF}{\end{DiagraBig}}
\newenvironment{DiagraSmall}%
{\psmatrix[colsep=3ex,rowsep=3ex,arrows=->,nodesep=1ex,npos=.45]}%
{\endpsmatrix}
\newcommand{\CDS}{\begin{DiagraSmall}}
\newcommand{\CDSF}{\end{DiagraSmall}}
\newcommand{\matriceuu}[1]%
    {\begin{pmatrix} #1 \end{pmatrix}}
\newcommand{\matricedd}[4]%
    {\begin{pmatrix} #1 & #2 \\ #3 & #4 \end{pmatrix}}
\newcommand{\vecteurd}[2]%
    {\begin{pmatrix} #1 \\ #2 \end{pmatrix}}
\newcommand{\ligned}[2]%
    {\begin{pmatrix} #1 & #2 \end{pmatrix}}
\newcommand{\matricett}[9]%
    {\begin{pmatrix}  #1 & #2 & #3 \\
                      #4 & #5 & #6 \\
                      #7 & #8 & #9 \end{pmatrix}}
\newcommand{\vecteurt}[3]%
    {\begin{pmatrix} #1 \\ #2 \\ #3 \end{pmatrix}}
\newcommand{\lignet}[3]%
    {\begin{pmatrix} #1 & #2 & #3 \end{pmatrix}}
\newlength{\jsWidthCol}
\newlength{\blocinterligne}
\newlength{\blocinterligned}
\newlength{\temparraycolsep}
\newlength{\longueurbloc}
\newlength{\hauteurbloc}
\newlength{\centragebloc}
\newlength{\longueurblc}
\newlength{\hauteurblc}
\newlength{\centrageblc}
\newcommand{\blocligne}[1]%
    {\framebox[\longueurbloc]{$#1$}}
\newcommand{\blocmatrice}[1]%
    {\framebox[\longueurbloc]{\rule[\centragebloc]{0mm}{\hauteurbloc}$#1$}}
\newcommand{\blocvecteur}[1]%
    {\framebox{\rule[\centragebloc]{0mm}{\hauteurbloc}$#1$}}
\newcommand{\blcligne}[1]%
    {\framebox[\longueurblc]{$#1$}}
\newcommand{\blcmatrice}[1]%
    {\framebox[\longueurblc]{\rule[\centrageblc]{0mm}{\hauteurblc}$#1$}}
\newcommand{\blcvecteur}[1]%
    {\framebox{\rule[\centrageblc]{0mm}{\hauteurblc}$#1$}}
\newcommand{\matriceddblvs}[4]
   {\setlength{\temparraycolsep}{\arraycolsep}%
    \setlength{\arraycolsep}{1.3pt}%
        \left (%
    \begin{array}{cc}%
                #1  & \blcligne{#2} \\
            \blcvecteur{#3} & \blcmatrice{#4}
        \end{array}%
        \right )%
    \setlength{\arraycolsep}{\temparraycolsep}%
   }%
\newcommand{\vecteurdblvs}[2]%
   {\setlength{\temparraycolsep}{\arraycolsep}%
    \setlength{\arraycolsep}{1.5pt}%
        \left (%
    \begin{array}{c}%
                #1  \\
            \blcvecteur{#2}
        \end{array}%
        \right )%
    \setlength{\arraycolsep}{\temparraycolsep}%
   }%
\newcommand{\lignedblvs}[2]%
   {\setlength{\temparraycolsep}{\arraycolsep}%
    \setlength{\arraycolsep}{1.5pt}%
        \left (%
    \begin{array}{cc}%
                #1  & \blcligne{#2}
        \end{array}%
        \right )%
    \setlength{\arraycolsep}{\temparraycolsep}%
   }%
\newcommand{\matricettblvs}[9]
   {\setlength{\temparraycolsep}{\arraycolsep}%
    \setlength{\arraycolsep}{1.5pt}%
        \left (%
    \begin{array}{ccc}%
                #1  & \blcligne{#2} & #3\\
            \blcvecteur{#4} & \blcmatrice{#5} & \blcvecteur{#6}\\
                #7  & \blcligne{#8} & #9\\
        \end{array}%
        \right )%
    \setlength{\arraycolsep}{\temparraycolsep}%
   }%
\newcommand{\vecteurtblvs}[3]%
   {\setlength{\temparraycolsep}{\arraycolsep}%
    \setlength{\arraycolsep}{1.5pt}%
        \left (%
    \begin{array}{c}%
                #1  \\
            \blcvecteur{#2}\\
                #3
        \end{array}%
        \right )%
    \setlength{\arraycolsep}{\temparraycolsep}%
   }%
\newcommand{\lignetblvs}[3]%
   {\setlength{\temparraycolsep}{\arraycolsep}%
    \setlength{\arraycolsep}{1.5pt}%
        \left (%
    \begin{array}{ccc}%
                #1  & \blcligne{#2} & #3
        \end{array}%
        \right )%
    \setlength{\arraycolsep}{\temparraycolsep}%
   }%
\newcommand{\matricettblblvs}[9]
   {\setlength{\temparraycolsep}{\arraycolsep}%
    \setlength{\arraycolsep}{1.5pt}%
        \left (%
    \begin{array}{ccc}%
                #1  & \blcligne{#2} & \blcligne{#3}\\
            \blcvecteur{#4} & \blcmatrice{#5} & \blcmatrice{#6}\\
                \blcvecteur{#7}  & \blcmatrice{#8} & \blcmatrice{#9}\\
        \end{array}%
        \right )%
    \setlength{\arraycolsep}{\temparraycolsep}%
   }%
\newcommand{\vecteurtblblvs}[3]%
   {\setlength{\temparraycolsep}{\arraycolsep}%
    \setlength{\arraycolsep}{1.5pt}%
        \left (%
    \begin{array}{c}%
                #1  \\
            \blcvecteur{#2}\\
                \blcvecteur{#3}
        \end{array}%
        \right )%
    \setlength{\arraycolsep}{\temparraycolsep}%
   }%
\newcommand{\lignetblblvs}[3]%
   {\setlength{\temparraycolsep}{\arraycolsep}%
    \setlength{\arraycolsep}{1.5pt}%
        \left (%
    \begin{array}{ccc}%
                #1  & \blcligne{#2} & \blcligne{#3}
        \end{array}%
        \right )%
    \setlength{\arraycolsep}{\temparraycolsep}%
   }%
\newcommand{\PushLine}{\hbox{}\hfill\hbox{}}
\newlength{\DefiTest}\setlength{\DefiTest}{0pt}%
\newlength{\DefiHeightu}\newlength{\DefiHeightd}%
\newlength{\DefiDepthu}\newlength{\DefiDepthd}%
\newcommand{\Defi}[2]%
    {%
     \settoheight{\DefiHeightu}{${\displaystyle #1}$}%
     \settodepth{\DefiDepthu}{${\displaystyle #1}$}%
     \addtolength{\DefiHeightu}{\DefiDepthu}%
     \settoheight{\DefiHeightd}{${\displaystyle #2}$}%
     \settodepth{\DefiDepthd}{${\displaystyle #2}$}%
     \addtolength{\DefiHeightd}{\DefiDepthd}%
     \left\{#1%
     \rule[-\DefiDepthd]{\DefiTest}{\DefiHeightd}%
     \xmd\right|%
     \left.%
     \rule[-\DefiDepthu]{\DefiTest}{\DefiHeightu}%
      #2\right\}%
     }
\newlength{\ColoText}
\newlength{\ColoFigu}
\newlength{\TextFiguSpace}
\newlength{\parindenttemp} 
\newlength{\parskiptemp} 
\newlength{\fboxseptemp} 
\newcommand{\TFBoxing}{}
\newcommand{\TFVertAlig}{}
\newcommand{\LeftLarg}{}
\renewcommand{\LeftLarg}{.66}
\ifdraft\renewcommand{\TFBoxing}{\fbox}\fi
\newcommand{\TxtFg}[3]%
   {%
    \setlength{\ColoText}{#1\textwidth}%
    \setlength{\ColoFigu}{\textwidth}%
    \addtolength{\ColoFigu}{-\ColoText}%
    \addtolength{\ColoText}{-.5\TextFiguSpace}%
    \addtolength{\ColoFigu}{-.5\TextFiguSpace}%
    \ifdraft\setlength{\fboxsep}{0pt}\fi
    \noi
    \TFBoxing{%
       \begin{minipage}[\TFVertAlig]{\ColoText}%
          \setlength{\parindent}{\parindenttemp}%
          \setlength{\parskip}{\parskiptemp}%
          \par\vspace*{0mm}
             #2
       \end{minipage}%
             }%
    \hspace*{\TextFiguSpace}%
    \TFBoxing{%
       \begin{minipage}[\TFVertAlig]{\ColoFigu}%
          \par\vspace*{0mm}%
             #3%
       \end{minipage}%
             }%
    \ifdraft\setlength{\fboxsep}{\fboxseptemp}\fi
   }%
\newcommand{\TextFigu}[3][\LeftLarg]%
   {\renewcommand{\TFVertAlig}{t}\TxtFg{#1}{#2}{#3}}
\newcommand{\TextFiguC}[3][\LeftLarg]%
   {\renewcommand{\TFVertAlig}{c}\TxtFg{#1}{#2}{#3}}
\newcommand{\TextFiguX}[3][\LeftLarg]
   {%
    \setlength{\ColoText}{#1\textwidth}%
    \setlength{\ColoFigu}{\textwidth}%
    \addtolength{\ColoFigu}{-\ColoText}%
    \addtolength{\ColoText}{-.5\TextFiguSpace}%
    \addtolength{\ColoFigu}{-.5\TextFiguSpace}%
    \addtolength{\ColoFigu}{\ETAExtendedLineWidth}
    \ifdraft\setlength{\fboxsep}{0pt}\fi
    \noi
    \ifodd\value{page}%
       \TFBoxing{%
          \begin{minipage}[t]{\ColoText}%
             \RstBLS
             \setlength{\parindent}{\parindenttemp}%
             \setlength{\parskip}{\parskiptemp}%
             \par\vspace*{0mm}
                #2
          \end{minipage}%
                }%
       \hspace*{\TextFiguSpace}%
       \TFBoxing{%
          \begin{minipage}[t]{\ColoFigu}%
             \par\vspace*{0mm}%
                #3%
          \end{minipage}%
                }%
    \else
       \hspace*{-\ETAExtendedLineWidth}
       \TFBoxing{%
          \begin{minipage}[t]{\ColoFigu}%
             \par\vspace*{0mm}%
                #3%
          \end{minipage}%
                }%
       \hspace*{\TextFiguSpace}%
       \TFBoxing{%
          \begin{minipage}[t]{\ColoText}%
             \RstBLS
             \setlength{\parindent}{\parindenttemp}%
             \setlength{\parskip}{\parskiptemp}%
             \par\vspace*{0mm}
                #2
          \end{minipage}%
                }%
    \fi%
    \ifdraft\setlength{\fboxsep}{\fboxseptemp}\fi
   }
\newcommand{\Axio}[1]%
   {\pointn #1\hspace*{.1em}\jspointtiret\hspace*{.4em}\ignorespaces}
\newcommand{\ExtnF}[1]%
   {\overset{{\scriptscriptstyle \pmb{\smile}}}{#1}}
\newcommand{\DiffF}[1]%
   {\overset{{\scriptscriptstyle \pmb{\lor}}}{#1}}
\newcommand{\LocaF}[1]%
   {\overset{{\scriptscriptstyle \leftrightarrow}}{#1}}
\newcommand{\jsDist}[2][{}]%
   {\operatorname{\mathbf{d}_{#1}}\left(#2\right)}
\renewcommand{\lim}{{\operatornamewithlimits{\mathsf{lim}}}}
\newcommand{\SerSAnMon}[2]%
    {#1 \langle \! \langle  #2  \rangle \! \rangle }
\newcommand{\SerSAnMonD}[2]%
    {\left[#1\right] \langle \! \langle  #2  \rangle \! \rangle }
\newcommand{\SerMon}[1]%
    {\!\langle \! \langle  #1  \rangle \! \rangle }
\newcommand{\PolSAnMon}[2]%
    {{#1 \langle  #2 \rangle }}
\newcommand{\PolMon}[1]%
    {{\!\langle  #1 \rangle }}
\newsavebox{\LeftBraket}
\savebox{\LeftBraket}{\scalebox{0.7 1.2}{$<$}}
\newsavebox{\RightBraket}
\savebox{\RightBraket}{\scalebox{0.7 1.2}{$>$}}
\newcommand{\jsStar}[1]{{{#1}^{*}}}
\newcommand{\Ae}{\jsStar{A}}
\newcommand{\iotaK}{\iota_{\ShiftInd{K}}}
\newcommand{\compos}{\ccdot }
\newcommand{\phiikpsi}%
{{\varphi ^{-1}\! \compos        \iotaK \! \compos \! \psi }}
\newcommand{\phiiotpsi}[1]%
{{\varphi ^{-1}\! \compos        \iota _{\ShiftInd{#1}} \! \compos \! \psi }}
\newcommand{\phiintkpsi}[1]%
{{(#1\varphi ^{-1}\! \cap K) \psi }}
\newcommand{\jsless}
   {\mathrel{\leqslant_{\!\!\!\!\scriptscriptstyle{/}}}}
\newcommand{\jsgrea}
   {\mathrel{\geqslant_{\!\!\!\!\scriptscriptstyle{\backslash}}}}
\newcommand{\lexiconeq}
   {\preccurlyeq_{\!\!\!\!\!\scalebox{1.8 1}{\scriptscriptstyle{\pmb{/}}}}}
\newcommand{\jsAutUn}[1]%
   {\mbox{$\left\langle \thinspace #1 \thinspace \right\rangle $}}
\newcommand{\aut}[1]{\jsAutUn{#1}} 
\newcommand{\ShiftInd}[1]{\raisebox{-0.3ex}{$\scriptstyle{#1}$}}
\newcommand{\actb}{\mathbin{\raisebox{0.2ex}%
                        {${\scriptscriptstyle \circ} $}}}
\newcommand{\ccdot}{\actb} 
\newlength{\vbh}\newlength{\vbd}\newlength{\vbt}%
\newcommand{\CompAuto}[1]%
    {%
     \settodepth{\vbd}{\mbox{$\displaystyle{#1\strut}$}}%
     \settoheight{\vbh}{\mbox{$\displaystyle{#1\strut}$}}%
     \setlength{\vbt}{\vbh}\addtolength{\vbt}{\vbd}%
     {}%
     \psline[linewidth=0.8pt]{c-c}(0,-.65\vbd)(0,.9\vbh)%
     \hspace*{0.7pt}%
     {#1}%
     \kern0.8pt%
     \psline[linewidth=0.8pt]{c-c}(0,-.65\vbd)(0,.9\vbh)%
     }%
\newcommand{\bornedeuxlignes}[2]%
{\mbox{$
\begin{array}{c}{\scriptstyle #1}\\ {\scriptstyle #2} \end{array}
       $}}
\newcommand{\pathaut}[2]{\underset{#2}{\path{#1}}}
\newcommand{\ExpDer}[2][a]%
    {\operatorname{\frac{\partial}{\partial \mbox{$#1$}}}#2}
\newcommand{\ExpDerP}[2][a]%
    {\operatorname{\frac{\partial}{\partial\mbox{$#1$}}}\left(#2\right)}
\newcommand{\ExpDerr}[2][a]%
    {\operatorname{\frac{\partial_{\mathrm{R}}}{\partial \mbox{$#1$}}}#2}
\newcommand{\ExpDerB}[2][a]%
   {\operatorname{\frac{\partial_\mathsf{b}}{\partial \mbox{$#1$}}}#2}
\newcommand{\ExpDerBP}[2][a]%
   {\operatorname{\frac{\partial_\mathsf{b}}{\partial \mbox{$#1$}}}\left(#2\right)}
\renewcommand{\pq}{\base}
\newcommand{\defin}[1]{Definition~\ref{d.#1}}
\newcommand{\figur}[1]{Figure~\ref{f.#1}}
\newcommand{\lemme}[1]{Lemma~\ref{l.#1}}
\newcommand{\propo}[1]{Proposition~\ref{p.#1}}
\newcommand{\theor}[1]{Theorem~\ref{t.#1}}
\renewcommand{\runningnotice}[1]{
  \AtBeginShipout{%
    \AtBeginShipoutUpperLeft{%
      \put(\dimexpr\paperwidth-16pt\relax,-\paperheight+1.5cm){%
        \rotatebox{90}{\makebox[\paperheight-3cm][r]{%
          \Vhrulefill\normalfont\ttfamily\normalsize~#1~\raisebox{0.7ex}{\rule{1cm}{0.4pt}}%
        }}%
      }%
    }
  }
}
\def\curlang{en}
\newcommand{\langdep}[2]{\ifthenelse{\equal{\curlang}{fr}}{#1}{#2}}
\title {On sets of numbers rationally represented\\
in a rational base number system%
.}
\date{\today}
\author
{
        Victor Marsault\thanks{Corresponding author, {\tt victor.marsault@telecom-paristech.fr}}${}^{~,}$\footnotemark 
        \addtocounter{footnote}{-1}
        \and
        Jacques Sakarovitch\thanks{Telecom-ParisTech and CNRS, 46 rue Barrault 75013 Paris, France}
}
\let\newtitle\@title
\let\newauthor\@author
\let\newdate\@date
\begin{document}
\pretolerance=4000
\tolerance=5000
\maketitle
\thispagestyle{plain}

\begin{abstract}
In this work, it is proved that a set of numbers closed under addition and whose
  representations in a rational base numeration system is a rational
  language is not a finitely generated additive monoid.

A key to the proof is the definition of a strong combinatorial property on languages : 
  the bounded left iteration property.
It is both an unnatural property in usual formal language theory (as it 
  contradicts any kind of pumping lemma) and an ideal fit to the languages 
  defined through rational base number systems.
\end{abstract}

\section{Introduction}
The numeration systems in which the base is a rational number have been
introduced and studied in~\cite{AkiyEtAl08}.
It appeared there that the language of representations of all integers in such a 
system is ``complicated'', by reference to the classical Chomsky 
hierarchy and its usual iteration properties.
This work is a contribution to a better understanding of the structure of this
language.
It consists in a result whose statement first requires some basic
facts about number systems.

Given an integer $p$ as a base, the set of non-negative integers~$\N$ is
represented by the set of words on the alphabet
$\Apq=\set{0,1,\ldots,(p-1)}$ which do not begin with a~$0$.
This
set~$L_p=(\Apq\backslash 0)\Apqe$ is rational, \ie accepted by a
finite automaton.  This representation of integers has another
property related to finite automata: the addition is realised by a
finite 3-tape automaton.

This addition algorithm can be broken down into two steps : first a
digit-wise addition which outputs a word on the double alphabet
$A_{2p-1}$ whose value in base $p$ is the sum of the two input words;
second a transformation of a word of ${(A_{2p-1})}^*$ into a word of
$\Apqe$ without modifying its value.  This second step can be done by a finite
transducer called the \emph{converter} (see~Section~2.2.2 of~\cite{FrouSaka10hb}).

Many non-standard numeration systems that have been studied so far
have the property that the set of representations of the integers is
a rational language.
It is even \emph{the} property that is retained in the study of the
\emph{abstract numeration systems}, even if
it is not the case that addition can be realised by a finite
automaton (\cf~\cite{LecoRigo10hb}).

In the rational base numeration systems, as defined and studied
in~\cite{AkiyEtAl08}, the situation is reverse: the set of integers
is not represented by a rational language (not even a context-free
one), but nevertheless the addition is realised by a finite automaton.
More precisely, let~$p$ and~$q$ be two coprime integers, with~${p>q}$.
In the $\pq$-numeration system, the digit alphabet is again~$\Apq$, and the
value of a word~$\msp{u = a_n\cdots\xmd a_2\xmd a_1}\msp$ in~$\Apqe$
is~$\val{u}=\frac{1}{q}\sum_{i=0}^n a_{i}(\frac{p}{q})^i$.
In this system, every integer has a unique finite representation, but the
set~$\Lpq$ of the $\pq$-representations of the integers is not a
rational language.
The set~$\Vpq$ of all numbers that can be
represented in this system, $\Vpq=\val{\Apqe}$, is closed under
addition but is not finitely generated (as an additive monoid).



In this work, we establish the contradiction between being a finitely
generated additive monoid and having a rational set of
representations in a rational base number system.

\begin{theorem}
\label{t.fin-gen-mon}%
  The set of the $\base$-representations of any finitely generated
  additive submonoid of~$\Vpq$ is not a rational language.
\end{theorem}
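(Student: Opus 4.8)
The plan is to show that for any finitely generated additive submonoid $M$ of $\Vpq$, the language $L$ of $\base$-representations of the elements of $M$ admits no unbounded iteration of a factor, which is incompatible with $L$ being an infinite rational language. The only structural fact I need about $M$ is a bound on denominators. Every value can be written $\val{w}=N(w)/q^{\wlen{w}}$, where $N(w)=\sum_i a_i\,p^i q^{\wlen{w}-1-i}$ is an integer, so $\val{w}\in\Z[1/q]$. If $M$ is generated by $v_1,\dots,v_k$ with $v_j=\val{w_j}$, then setting $D=\max_j\wlen{w_j}$, every non-negative integer combination of the $v_j$ still lies in $\tfrac1{q^D}\Z$. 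Hence $M\subseteq\tfrac1{q^D}\Z$ for a fixed $D$; equivalently, writing $v_q$ for the $q$-adic valuation, every element of $M$ satisfies $v_q\geq -D$. I assume $M$ nontrivial, so $M$, and therefore $L$, are infinite.

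Next I would quantify the effect of iterating a factor. From the definition of $\val{\cdot}$ one gets the concatenation rule $\val{uv}=(\base)^{\wlen{v}}\val{u}+\val{v}$, and iterating it, for a factorisation $w=\alpha\beta\gamma$ with $\wlen{\beta}=b\geq1$ and $\wlen{\gamma}=c$, yields
\[
  \val{\alpha\beta^n\gamma}\ege A\,\rho^n+B\EqVrg\quad\text{where } \rho=(\base)^{b}>1,\ A=(\base)^{c}\Bigl(\val{\alpha}+\tfrac{\val{\beta}}{\rho-1}\Bigr)\EqPnt
\]
Since all values are non-negative and $\rho>1$, the equality $A=0$ forces $\val{\alpha}=\val{\beta}=0$, i.e.\ both $\alpha$ and $\beta$ consist only of the digit $0$. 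Thus as soon as $\alpha$ or the iterated block $\beta$ carries a nonzero digit, one has $A\neq0$, and $A\in\Q$ so $v_q(A)$ is a well-defined integer.

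The crux is then an arithmetic obstruction. Because $\gcd(p,q)=1$ we have $v_q(\rho)=-b<0$, so $v_q(A\rho^n)=v_q(A)-nb\to-\infty$, and by the non-archimedean property $v_q(A\rho^n+B)=v_q(A)-nb$ for all large $n$. Hence, when $A\neq0$, the values $\val{\alpha\beta^n\gamma}$ eventually have $q$-adic valuation below $-D$, so they leave $\tfrac1{q^D}\Z\supseteq M$ and $\alpha\beta^n\gamma\notin L$ for all large $n$: no nonzero block can be iterated unboundedly inside $L$, which is the \emph{bounded left iteration property}. To finish, suppose $L$ were rational; it would admit a pumping constant $N$, and a word $w\in L$ with $\wlen{w}>N$ (one exists as $L$ is infinite) would factor as $w=\alpha\beta\gamma$ with $\wlen{\alpha\beta}\leq N$, $\wlen{\beta}\geq1$ and $\alpha\beta^n\gamma\in L$ for every $n\geq0$. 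But $\alpha\beta$ is a nonempty prefix of a representation, so its first letter is the leading digit of $w$, which is nonzero under the usual no-leading-zero convention; therefore $\alpha$ or $\beta$ carries a nonzero digit, $A\neq0$, and we contradict the previous paragraph. So $L$ is not rational.

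The main obstacle is the third step, namely establishing the bounded left iteration property in full generality. The geometric formula above handles iteration of a single factor cleanly, but a rigorous treatment must state the property for arbitrary words of $L$, dispatch the degenerate all-zero (leading-zero) cases so they cannot slip through the factorisation supplied by the pumping lemma, and isolate ``infinite $+$ bounded left iteration $\Rightarrow$ not rational'' as a self-contained lemma independent of this particular numeration system. The valuation computation is the engine, but making it interact correctly with every factorisation a finite automaton can produce is where the real work lies.
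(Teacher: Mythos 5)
Your argument is correct in substance, but it takes a genuinely different route from the paper's. The paper never computes the values of pumped words directly: it first shows (Proposition~\ref{p.fin-gen}) that a finitely generated submonoid~$M$ is contained in a finite union of translates~$\bigcup_{i}(m_i+\N)$, then transports the known non-iterability of~$\Lpq$ to each~$\cod{m_i+\N}=\addconst{w_i}(\Lpq)$ by means of the incrementer transducer (Propositions~\ref{p.suc-cit} and~\ref{p.bli-ima-bli}), and concludes by closure of the BLIP property under finite unions and sublanguages. You short-circuit all of that machinery: the only structural input you need is the denominator bound~$M\subseteq\frac{1}{q^D}\Z$, and the engine is the closed form~$\val{\alpha\beta^n\gamma}=A\rho^n+B$ together with the fact that the valuation of~$A\rho^n$ tends to~$-\infty$. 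This buys generality and self-containment: your argument in fact shows that the representation language of \emph{any} infinite subset of~$\Vpq$ with bounded denominators fails the pumping lemma, which the paper's reduction does not give. What it does not give is the stronger Theorem~\ref{t.blip-fin-gen}: the paper's BLIP property quantifies over words~$u\xmd v^{i}$ that are merely \emph{prefixes} of words of~$L$, and since~$\cod{M}$ is not prefix-closed, your full-word pumping argument does not establish it. Your closing remark that you have obtained ``the bounded left iteration property'' therefore overstates slightly what was proved --- though what you proved is exactly what the stated theorem needs.

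Two small repairs are required. First, $q$ is only assumed coprime to~$p$, not prime, so ``the $q$-adic valuation'' is not multiplicative and the identity~$v_q(A\rho^n)=v_q(A)-nb$ can fail for composite~$q$. Replace it by the $\ell$-adic valuation for a prime divisor~$\ell$ of~$q$: then~$\ell\nmid p$, every~$x\in\frac{1}{q^D}\Z$ satisfies~$v_\ell(x)\geq -D\xmd v_\ell(q)$, while~$v_\ell(\rho)=-b\xmd v_\ell(q)<0$, and the escape argument goes through verbatim. Second, the trivial monoid~$\set{0}$ has representation language~$\set{\epsilon}$, which is rational; you rightly assume~$M$ nontrivial, and the theorem must indeed be read with that (tacit) hypothesis.
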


The proof of this statement relies on three ingredients.
The first one is the description of a weak iteration property
whose negation is satisfied by the language~$\Lpq$.
The second one is the construction of a sequential letter-to-letter right
 transducer that realises, on the~$\pq$-representations,
the addition of a fixed value to the elements of~$\Vpq$.
Finally, the third one is a characterisation of a finitely generated 
additive submonoid of~$\Vpq$ as a finite union of translates of the 
set of the integers.

The paper is organised as follows: after the preliminaries, where we essentially
recall the definition of transducers, we present with more details in 
\rsection{rat-base} the numeration system in base~$\pq$ .
In \rsection{blip}, we describe the Bounded Left Iteration Property (BLIP) and in
\rsection{incrementer}, we build a transducer called incrementer.
In the last section, we give the proof of a much stronger statement 
than~\rtheorem{fin-gen-mon}, expressed with the BLIP property.


\section{Preliminaries}\lsection{prelim}

We essentially follow notations and definitions of \cite{Saka09} for automata and
  transducers.
An \emph{alphabet} is a finite set of \emph{letters},
  the \emph{free monoid} generated by~$A$, and denoted by~$A^*$, 
  is the set of finite
  \emph{words} over~$A$.
The \emph{concatenation} of two words~$u$ 
  and~$v$ of~$A^*$ is denoted by~$u\xmd v$,
  or by~$\msp u.v\msp$ when the dot adds hopefully to readability.
A \emph{language} (over~$A$) is any subset of~$A^*$.

\smallskip

A language is said to be \emph{rational} (resp. \emph{context-free}) if it is 
  accepted by a finite automaton (resp. a \text{pushdown automaton}). 
The precise definitions of these classes of automata are however irrelevant 
  to the present work, and can be found in~\cite{HopcMotwUllm00}.
Similarly, we are only considering (and thus defining) a very restricted 
  class of transducers, namely the \emph{sequential letter-to-letter} 
  transducer.

%

\medskip

Given two alphabets~$A$ and~$B$, 
a sequential letter-to-letter (left) transducer~$\Tc$ from~$A^*$ to~$B^*$ is a directed graph whose
edges are labelled in~$A\times B$.
More precisely,~$\Tc$ is defined by a 6-tuple~$\msp\Tc=\aut{Q,A,B,\delta,\eta,i,\omega}\msp$
where Q is the set of \emph{states};~$A$ is the \emph{input alphabet};~$B$ is the 
\emph{output alphabet};~$\delta:Q\times A\rightarrow Q$ is the 
\emph{transition function};~$\eta:Q\times A\rightarrow B$ is the \emph{output function};~$i$ 
is the \emph{initial state} and~$\omega:Q\rightarrow B^*$
is the \emph{final function}. 

Moreover, we call \emph{final} any state in the definition domain
of~$\omega$. As usual, the function~$\delta$ (resp.~$\eta$) is extended 
  to~$Q\times A^* \rightarrow Q$ (resp. $Q\times A^* \rightarrow B^*$) 
  by~${\delta(p,\epsilon)=p}$ (resp.~$\eta(p,\epsilon)=\epsilon$)
  and~$\delta(p,a.u)=\delta(\delta(p,a),u)$
  (resp.~$\eta(p,a.u)=\eta(p,a).\eta(\delta(p,a),u)$).
  
Given~$\Tc$, we write~$\msp{p\pathx{u~|~v}[\Tc]q}\msp$
  if, and only if,~${\delta(p,u)=q}$ and~$\eta(p,u)=v$.
By analogy, we denote by~$\msp p\pathx{w}[\Tc]\msp$ the fact that~$p$ is a final state
  and that~$\omega(p)=w$.
%
%
The \emph{image} by~$\Tc$ of a word~$u$, denoted by~$\Tc(u)$, 
  is the word~$\msp v.w\msp $,
  if~$\msp{i\pathx{u~|~v}[\Tc]p\pathx{w}[\Tc]}\msp$.


\smallskip

Finally, a transducer is said to be a \emph{right transducer}, if it reads the 
words from right to left; and to be \emph{complete} if both the transition function
and the output function are total functions.

\medskip

In the following, every considered transducer will be complete, letter-to-letter,
right and sequential.

\section{Rational base number system}\lsection{rat-base}
We recall here the definitions, notations and 
constructions of~\cite{AkiyEtAl08}.
Let~$p$ and~$q$ be two coprime integers such that~$p>q>1$.
Given a positive integer~$N$, let us define~$N_0=N$ and for all~$i>0$:
\begin{equation}
  q\xmd N_i = pN_{i+1} + a_i
\end{equation}
where~$a_i$ is the remainder of the Euclidean division of $q\xmd N_i$ by~$p$,
hence in~$A_p$.
Since~$p>q$, the sequence~$(N_i)_i$ is strictly decreasing and eventually
stops at $N_{k+1}=0$. Moreover the equation
\begin{equation}
  N = \sum^{k}_{i=0} \frac{a_i}{q} \left( \pq \right)^i
\end{equation}

\noindent holds. The evaluation function~$\pi$ is derived from this formula. 
The value of a word~${u=a_na_{n-1}\cdots\xmd a_0}$ over~$A_p$ is defined as
\begin{equation}
  \val{a_na_{n-1}\cdots\xmd a_0} = \sum^{n}_{i=0} \frac{a_i}{q} \left( \pq \right)^i
\end{equation}

Conversely, a word~$u$ is called a~$\pq$-representation of
a number~$x$ \linebreak if~${\val{u}=x}$.
Since the representation is unique up to leading 0's
(see~{\cite[\nlb Theorem~1]{AkiyEtAl08}}),~$u$ is denoted by~$\cod{x}_{\pq}$
(or~$\cod{x}$ for short),
and in the case of integers, can be computed with the modified
Euclidean division algorithm above.
By convention, the representation of~0 is the empty word~$\epsilon$.

\medskip

It should be noted that a rational base number systems is \strong{not} 
a~$\beta$-numeration (\cf \cite[Chapter~7]{Loth02}) in the special case
where~$\beta$ is rational.
In the latter, the digit set is~$\set{0,1,\ldots,\ceil{\pq}}$
and the weight of the~$i$-th leftmost
digit is~$(\pq)^i$; whereas in rational base number systems, they respectively
are~$\set{0,1,\ldots,(p-1)}$ and~$\frac{1}{q}(\pq)^i$.

\begin{definition}
  The representations of integers in the~$\pq$-system form a language 
  over~$A_p$, which is denoted
  by~$\Lpq$.
\end{definition}

It is immediate that~$\Lpq$ is prefix-closed (since, in the modified Euclidean
division algorithm~$\cod{N}=\cod{N_1}.a_0$) and prolongable
(there exists an~$a$ such that~$q$ divides~$(np+a)$ and then~${\cod{\frac{np+a}{q}}=\cod{n}.a}$).
As a consequence,~$\Lpq$ can be represented as a tree whose branches are all
  infinite (\cf\rfigure{l32}).
\begin{figure}
  \centering
  \newlength{\vmord}\setlength{\vmord}{1.25cm}
  \newlength{\vmabs}\setlength{\vmabs}{2.5cm}

  \BoxedEdgeLabelOn
  \renewcommand{\BxdEdgLblSep}{1}
  \renewcommand{\BxdEdgLblStl}{none}
  \scalebox{0.5}{\input{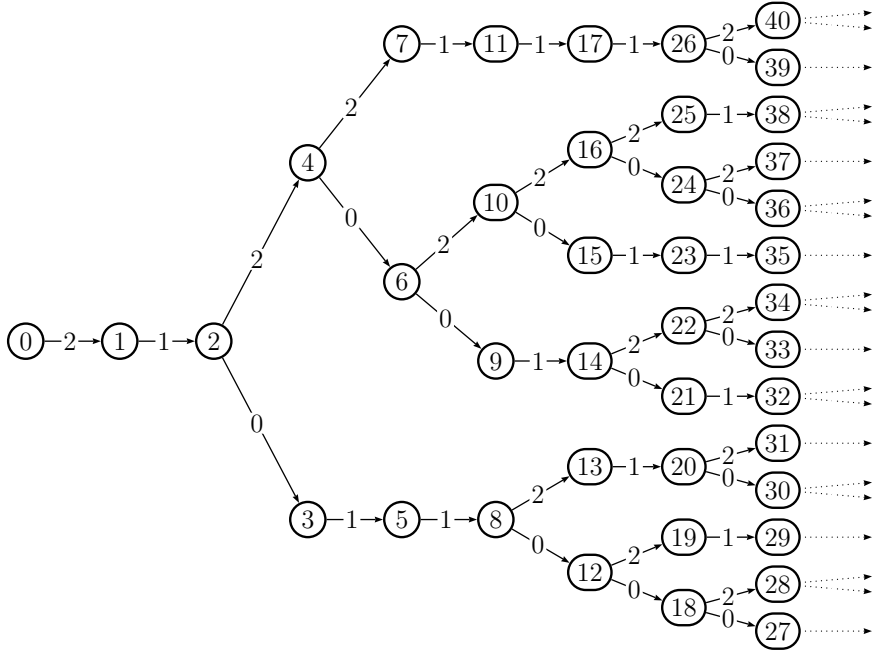}}
  \caption{The tree representation of the language~$L_{\frac{3}{2}}$}
  \lfigure{l32}
\end{figure}
On the other hand, the suffix language of~$\Lpq$ is all~$A_p^*$, and, moreover, 
every suffix appears periodically as established by the following:

\begin{proposition}[{\cite[Proposition 10]{AkiyEtAl08}}]
  For every word~$u$ over~$A_p$ of length~$k$, there exists an
  integer~$n<p^k$ such that~$u$ is a suffix of~$\cod{m}$ if, and only if,~$m$ is
  congruent to~$n$ modulo~$p^k$.
\end{proposition}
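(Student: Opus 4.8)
The plan is to recast the statement as the assertion that, for each fixed length~$k$, the map sending a positive integer~$m$ to the word formed by the~$k$ least-significant digits of~$\cod{m}$ factors through~$\Z/p^{k}\Z$ and induces a bijection between~$\Z/p^{k}\Z$ and~$\Apq^{k}$. Granting this, the integer~$n$ of the statement is exactly the unique residue whose associated length-$k$ word is~$u$, and the claimed equivalence is precisely the statement that this induced map is well defined and injective. The whole argument is driven by the modified Euclidean recurrence $q\,N_i = p\,N_{i+1} + a_i$ with~$a_i \in \Apq$: since the digits $a_0, a_1, \ldots$ are produced in order of increasing significance, the length-$k$ suffix of~$\cod{m}$ is precisely $a_{k-1}\cdots a_1 a_0$ started from $N_0 = m$.

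Now the key computation. First I would telescope the recurrence to obtain, for every~$j$,
\[
  q^{j}\, N_0 \;=\; p^{j}\, N_{j} \;+\; \sum_{i=0}^{j-1} p^{i}\, q^{\,j-1-i}\, a_i ,
\]
proved by a one-line induction using $q\,N_i = p\,N_{i+1} + a_i$ at each step. Taking $j=k$ and reducing modulo~$p^{k}$ kills the $p^{k}N_k$ term and gives $q^{k} N_0 \equiv \sum_{i=0}^{k-1} p^{i} q^{\,k-1-i} a_i \pmod{p^{k}}$. Because $p$ and~$q$ are coprime, $q^{k}$ is invertible modulo~$p^{k}$, so this recovers $N_0 \bmod p^{k}$ from the digits $a_0, \ldots, a_{k-1}$ alone; hence the length-$k$ suffix determines~$m$ modulo~$p^{k}$. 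For the converse determination I would run the recurrence downward: $a_0 = q\,N_0 \bmod p$ depends only on $N_0 \bmod p$, and $N_1 = (q\,N_0 - a_0)/p$ is well defined modulo~$p^{k-1}$ once $N_0$ is known modulo~$p^{k}$; iterating, each of $a_0, \ldots, a_{k-1}$ is a function of $m \bmod p^{k}$, so congruent integers share the same length-$k$ suffix.

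These two facts exhibit the digit map $\Z/p^{k}\Z \to \Apq^{k}$ together with its inverse, and since both sets have $p^{k}$ elements the map is a bijection; for a given~$u$ one then takes~$n$ to be its preimage. The one point that needs care — and the only real obstacle — is the boundary case where $\cod{m}$ is shorter than~$k$, so that~$u$ cannot literally be a suffix: here I would invoke the convention that $\cod{m}$ is defined up to leading zeros (equivalently, that the recurrence keeps emitting $a_i = 0$ and $N_{i+1} = 0$ once $N_i = 0$), so that the length-$k$ suffix is unambiguously $a_{k-1}\cdots a_0$ for every~$m$ and the equivalence holds verbatim. Everything else is routine arithmetic; the coprimality of~$p$ and~$q$, used to invert $q^{k}$ modulo~$p^{k}$, is the single essential hypothesis.
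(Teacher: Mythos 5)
The paper does not prove this statement: it is quoted verbatim from \cite{AkiyEtAl08} (Proposition~10 there) and used as a black box, so there is no internal proof to compare yours against. Judged on its own, your argument is correct and complete. The telescoped identity $q^{j}N_0 = p^{j}N_j + \sum_{i=0}^{j-1} p^i q^{j-1-i} a_i$ is exactly the right tool: reducing it modulo $p^k$ and inverting $q^k$ (legitimate since $\gcd(p,q)=1$) gives injectivity of the residue-to-suffix map, the downward induction $N_{i+1}=(qN_i-a_i)/p$ gives well-definedness modulo decreasing powers of $p$, and the cardinality count $|\Z/p^k\Z| = |A_p^k| = p^k$ upgrades injectivity to bijectivity --- which is genuinely needed, since the proposition implicitly asserts that \emph{every} length-$k$ word occurs as a suffix. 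You are also right that the only delicate point is the boundary case $|\cod{m}| < k$: read literally, the statement fails (e.g.\ for $p/q=3/2$, $u=02$ has residue class $n=1$, yet $02$ is not literally a suffix of $\cod{1}=2$), so the leading-zeros convention you invoke is not optional but necessary for the statement to be true as written; flagging it explicitly is the correct move.
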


In short, the congruence modulo~$p^k$ of~$n$ determines the suffix of length~$k$
of~$\cod{n}$. In contrast, the congruence modulo~$q^k$ of~$n$ determines the
words of length~$k$ appendable to~$\cod{n}$ in order to stay in~$\Lpq$, as is
stated in the next lemma.

\begin{lemma}[{\cite[Lemma 6]{AkiyEtAl08}}]\llemma{future}
  Given two integers~$n,m$ and a word~$u$ over~$A_p$:
  \begin{enumerate}[label=(\roman{*})]
    \item if both~$\cod{n}.u$ and~$\cod{m}.u$ are in~$\Lpq$, then~$n \equiv m~[q^{\wlen{u}}]$
    \item if $n \equiv m~[q^{\wlen{u}}]$, $\cod{n}.u$ is in~$\Lpq$ implies~$\cod{m}.u$~is in~$\Lpq$.
  \end{enumerate}
\end{lemma}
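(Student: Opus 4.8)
The plan is to translate the combinatorial condition ``$\cod{n}.u\in\Lpq$'' into a purely arithmetic one — a divisibility modulo~$q^{\wlen{u}}$ — and then read off both implications by elementary congruence manipulation. Throughout I would write $k=\wlen{u}$ and $u=b_{k-1}\cdots b_1 b_0$ with each $b_i\in\Apq$, and assume $n,m\geq 1$ so that the words $\cod{n}.u$ and $\cod{m}.u$ carry no leading~$0$; the degenerate case $n=0$ (where $\cod{n}=\epsilon$) is handled separately at the end.

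First I would compute the value of a concatenation. Appending the $k$ low-order digits~$u$ to~$\cod{n}$ shifts the weight of every digit of~$\cod{n}$ upward by~$k$ positions, so that $\val{\cod{n}.u}=n\,(\pq)^{k}+\val{u}$. Writing $\val{u}=\frac{s}{q^{k}}$ with $s=\sum_{i=0}^{k-1}b_i\,p^{i}q^{k-1-i}\in\N$, this becomes $\val{\cod{n}.u}=\frac{n\,p^{k}+s}{q^{k}}$. The crucial point is that membership in~$\Lpq$ is equivalent to integrality of this value: on one hand $\cod{n}.u\in\Lpq$ forces $\val{\cod{n}.u}\in\N$; on the other hand, if $\val{\cod{n}.u}\in\N$, then $\cod{n}.u$ is a leading-zero-free word whose value is that integer, hence by the uniqueness of $\pq$-representations (\cite[Theorem~1]{AkiyEtAl08}) it coincides with the canonical representation of that integer and therefore lies in~$\Lpq$. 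Since $s$ is a non-negative integer, this yields
\[
  \cod{n}.u\in\Lpq \quad\iff\quad q^{k}\mid\bigl(n\,p^{k}+s\bigr).
\]

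With this equivalence in hand, both parts become immediate. For~(i), if $\cod{n}.u$ and $\cod{m}.u$ both belong to~$\Lpq$, then $q^{k}$ divides both $n\,p^{k}+s$ and $m\,p^{k}+s$, hence divides their difference $(n-m)\,p^{k}$; as $\gcd(p,q)=1$ gives $\gcd(p^{k},q^{k})=1$, we conclude $q^{k}\mid(n-m)$, that is $n\equiv m~[q^{k}]$. For~(ii), assuming $n\equiv m~[q^{k}]$ and $\cod{n}.u\in\Lpq$, we have $n\,p^{k}\equiv m\,p^{k}~[q^{k}]$, and since $q^{k}\mid(n\,p^{k}+s)$ this forces $q^{k}\mid(m\,p^{k}+s)$. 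As the quotient $\frac{m\,p^{k}+s}{q^{k}}$ is moreover non-negative, the equivalence above gives $\cod{m}.u\in\Lpq$.

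The routine steps — the weight-shift identity and the congruence arithmetic — present no difficulty; the genuine content, and the step I expect to need the most care, is the equivalence between membership in~$\Lpq$ and integrality of~$\val{\cod{n}.u}$. It rests on uniqueness of representations together with the absence of leading zeros, so I would isolate the edge case $n=0$, where $\cod{n}.u=u$ may begin with a~$0$, and verify by hand that the divisibility criterion $q^{k}\mid s$ still records exactly the membership of~$u$ in~$\Lpq$, so that the two implications continue to hold.
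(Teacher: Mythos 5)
Your proof is correct and follows essentially the same route as the paper's: both reduce membership of $\cod{n}.u$ in $\Lpq$ to integrality of $n\left(\pq\right)^{\wlen{u}}+\val{u}$ and then conclude by coprimality of $p^{\wlen{u}}$ and $q^{\wlen{u}}$. You are in fact somewhat more careful than the paper, which asserts the membership--integrality equivalence without comment and does not flag the leading-zero edge case at $n=0$.
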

\begin{proof}

  (i). The word~$\cod{n}.u$ is in~$\Lpq$ if, and only,
  if~$(n(\pq)^{\wlen{u}}+\val{u})$ is an integer, and similarly for~$m$.
  It follows that $(n-m)(\pq)^{\wlen{u}}$ is equal to some integer~$z$,
  and then~$(p^{\wlen{u}})(n-m)=zq^{\wlen{u}}$, hence~$n \equiv m~[q^{\wlen{u}}]$.

  (ii). Analogous to (i).
\end{proof}

A direct consequence of this lemma is that given any two distinct words~$u$ 
and~$v$ of~$\Lpq$, there exists a word~$w$ such that~$uw$ is in~$\Lpq$ 
but~$vw$ is not. 
Hence, the set~$\{ u^{-1}\Lpq~|~ u\in A_p^{\xmd *} \}$ of left quotients of~$\Lpq$
is infinite, or equivalently:
\begin{corollary}
  The language~$\Lpq$ is not rational.
\end{corollary}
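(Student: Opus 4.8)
The plan is to invoke the Myhill--Nerode theorem: a language is rational if, and only if, its set of left quotients is finite. It therefore suffices to produce infinitely many distinct left quotients of~$\Lpq$. Since $\Lpq$ is infinite (it contains $\cod{n}$ for every $n\in\N$), it is enough to establish the separation property announced just before the statement, namely that any two distinct words of~$\Lpq$ admit different left quotients.

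So let $u$ and $v$ be distinct words of~$\Lpq$. Because the $\pq$-representation is unique, $u=\cod{n}$ and $v=\cod{m}$ for two distinct integers $n\neq m$. I want a word~$w$ with $\cod{n}.w\in\Lpq$ but $\cod{m}.w\notin\Lpq$. The key tool is the contrapositive of \lemme{future}(i): if $n\not\equiv m~[q^{\wlen{w}}]$, then $\cod{n}.w$ and $\cod{m}.w$ cannot both lie in~$\Lpq$. First I would fix the length, choosing $w$ with $q^{\wlen{w}}>\card{n-m}$, which forces $n\not\equiv m~[q^{\wlen{w}}]$ since $0<\card{n-m}<q^{\wlen{w}}$. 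Then I would use the prolongability of~$\Lpq$ (already established above: from any word of~$\Lpq$ one may append a suitable letter and remain in~$\Lpq$) to select, among words of that length, one particular $w$ with $\cod{n}.w\in\Lpq$. For this~$w$ the two extensions are not simultaneously in~$\Lpq$, and as $\cod{n}.w$ is, necessarily $\cod{m}.w\notin\Lpq$; hence $w\in u^{-1}\Lpq\setminus v^{-1}\Lpq$, so $u^{-1}\Lpq\neq v^{-1}\Lpq$.

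With the separation property in hand, the map $u\mapsto u^{-1}\Lpq$ is injective on the infinite set~$\Lpq$, so $\Lpq$ has infinitely many left quotients and is therefore not rational. The one point that needs a little care --- and the only place where the numeration system genuinely enters --- is guaranteeing the existence of the separating word~$w$: one must meet both the length bound $q^{\wlen{w}}>\card{n-m}$ and the prolongation requirement $\cod{n}.w\in\Lpq$. Both are easy here, since prolongability lets one extend $\cod{n}$ to arbitrary length inside~$\Lpq$, so any sufficiently long such extension works; notably no pumping-type argument is needed, and indeed none would be available. Everything else is the standard Myhill--Nerode bookkeeping.
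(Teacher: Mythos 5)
Your proof is correct and follows essentially the same route as the paper, which derives the separation of left quotients directly from Lemma~\ref{l.future} and concludes by Myhill--Nerode that the set of left quotients $\set{u^{-1}\Lpq~|~u\in A_p^*}$ is infinite. You merely spell out the details the paper leaves implicit (choosing $w$ long enough that $q^{\wlen{w}}>\card{n-m}$ and using prolongability to ensure $\cod{n}.w\in\Lpq$), and these details are right.
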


\begin{definition}[The value set]\ldefinition{vpq}
  We denote by~$\Vpq$ the set of numbers representable in base~$\pq$, namely:
  \begin{equation}
    \Vpq = \set{x~|~\exists u\in A_p^*, \val{u}=x}
  \end{equation}
  or equivalently~$\Vpq=\val{A_p^*}$
\end{definition}

The most notable property of~$\Vpq$ is that it is closed under addition, or
more precisely that the addition is realised by a transducer, described in 
\rsection{incrementer} (a full proof can be found in 
\cite[Section 3.3]{AkiyEtAl08}).

Secondly, from the definition of~$\pi$, one derives easily
that~$\Vpq\subseteq\Q$. More precisely~$\Vpq$ contains only
numbers of the form~$\frac{x}{y}$ where y divides a power of~$q$,
and conversely, for all~$k$,~$\Vpq$ contains almost every number~$\frac{x}{q^k}$.

\begin{lemma}
\label{l.thr-esh}
For every integer~$k$, there exits an integer~$m_k$ such that, for
every integer~$n$ greater than~$m_k$, $\frac{n}{q^k}$ belongs to~$\Vpq$.
\end{lemma}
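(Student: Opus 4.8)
The plan is to argue by induction on~$k$. The base case~$k=0$ is immediate: $\frac{n}{q^0}=n$ is a non-negative integer, and every non-negative integer belongs to~$\Vpq$, since it is the value of its own $\pq$-representation (an element of~$\Lpq\subseteq A_p^*$). Hence~$m_0=0$ works. The whole content of the statement is then carried by the inductive step, which I would set up so as to construct a representation of~$\frac{n}{q^k}$ explicitly from a representation of a suitable number with denominator~$q^{k-1}$.

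The engine of the induction is the effect on the value of appending a digit on the right. Directly from the definition of~$\pi$ one obtains, for every word~$u$ over~$A_p$ and every digit~$a\in A_p$,
\[
  \val{u.a}=\frac{p\,\val{u}+a}{q}.
\]
I would read this identity backwards: to realise~$\frac{n}{q^k}$ as a value it suffices to find a digit~$a\in A_p$ and a word~$u$ with~$\val{u}=\frac{n'}{q^{k-1}}$ for a suitable non-negative integer~$n'$, and then take~$u.a$. Imposing~$\val{u.a}=\frac{n}{q^k}$ forces~$p\,n'+a\,q^{k-1}=n$, that is,~$n'=\frac{n-a\,q^{k-1}}{p}$.

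It remains to choose the digit~$a$ so that~$n'$ is a non-negative integer lying above the threshold~$m_{k-1}$ supplied by the induction hypothesis. For~$n'$ to be an integer one needs~$a\,q^{k-1}\equiv n \pmod p$; this is the one genuinely arithmetic point: since~$p$ and~$q$ are coprime, $q^{k-1}$ is invertible modulo~$p$, so the congruence has a unique solution~$a$ in~$\{0,\ldots,p-1\}=A_p$. With that~$a$ one has~$n'\ge\frac{n-(p-1)q^{k-1}}{p}$, so that setting
\[
  m_k = p\,m_{k-1}+(p-1)\,q^{k-1}
\]
guarantees both~$n'\ge 0$ and~$n'>m_{k-1}$ whenever~$n>m_k$. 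The induction hypothesis then produces a word~$u$ with~$\val{u}=\frac{n'}{q^{k-1}}$, and the displayed identity gives~$\val{u.a}=\frac{n}{q^k}$, placing~$\frac{n}{q^k}$ in~$\Vpq$.

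The argument is short, and I expect its only substantial ingredient to be the solvability of the congruence~$a\,q^{k-1}\equiv n\pmod p$, which rests on~$p$ and~$q$ being coprime; the rest is bookkeeping of the thresholds. The point to watch is precisely that the digit~$a$ forced by the congruence keeps~$n'$ non-negative and above~$m_{k-1}$: this is exactly what produces the finitely many exceptions and hence accounts for the \emph{almost every} in the statement. I would also note that closure of~$\Vpq$ under addition, although available, is not needed here, since the recursion builds the representation explicitly.
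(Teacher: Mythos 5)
Your proof is correct, but it follows a genuinely different route from the one in the paper. The paper's argument leans on the closure of~$\Vpq$ under addition: it observes that the words~$1$ and~$1.0^{k-1}$ have values~$\frac{q^{k-1}}{q^k}$ and~$\frac{p^{k-1}}{q^k}$, deduces that~$(p^{k-1}\N+q^{k-1}\N)\frac{1}{q^k}\subseteq\Vpq$, and concludes by the classical fact that a numerical semigroup generated by two coprime integers contains every sufficiently large integer. Your induction on~$k$ dispenses with the addition theorem entirely (as you correctly note) and instead builds the representation digit by digit from the right, using only the identity~$\val{u.a}=\frac{p\,\val{u}+a}{q}$ and the invertibility of~$q^{k-1}$ modulo~$p$; this is in effect the paper's modified Euclidean division algorithm extended from~$\N$ to numerators over~$q^k$. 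What your approach buys is self-containedness --- closure of~$\Vpq$ under addition is itself a nontrivial result, established via the converter transducer --- together with an explicit threshold, since your recursion~$m_k=p\,m_{k-1}+(p-1)q^{k-1}$ resolves to~$m_k=(p-1)\frac{p^k-q^k}{p-q}$, whereas the paper remarks that the bound implicit in its own proof ``is far from being tight.'' What the paper's version buys is brevity and thematic fit, since the whole article is about the interaction of~$\Vpq$ with addition. The one point in your argument that genuinely needed care --- that the digit~$a$ forced by the congruence~$a\,q^{k-1}\equiv n\pmod p$ still leaves~$n'=\frac{n-a\,q^{k-1}}{p}$ a non-negative integer exceeding~$m_{k-1}$ --- is handled correctly by your choice of~$m_k$, so the induction closes.
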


\begin{proof}
If~$k = 0$, then one can take~$m=0$ since~$\N$ is contained in~$\Vpq$.

For $k \geq 1$, the words~$1$ and~$1.0^{(k-1)}$ have for respective
value~$\frac{1}{q}$ and~$\frac{p^{k-1}}{q^{k}}$.
For every integer~$i$ and~$j$, the number
$(\frac{i\times p^{(k-1)}+j\times q^{(k-1)}}{q^{k}})$ is in $\Vpq$,
since $\Vpq$ is closed under addition,
and this can be rewritten 
as~$(p^{(k-1)}\N+q^{(k-1)}\N)\frac{1}{q^{k}} \subseteq \Vpq$.
Since~$p^{(k-1)}$ and $q^{(k-1)}$ are coprime,
($p^{(k-1)}\N+q^{(k-1)}\N$) ultimately covers $\N$.
\end{proof}

Experimentally, the bound~$m_k$ is increasing with~$k$ but the expression
resulting from this Lemma is far from being tight.
As a consequence, it proves to be difficult to define~$\Vpq$ without using
the~$\pq$-rational base number system.

\section{BLIP languages}\lsection{blip}

In the previous section, an insight is given about why~$\Lpq$ is not rational.
It is additionally proven in~\cite{AkiyEtAl08} that~$\Lpq$ is not
context-free either.
However, being context sensitive doesn't seem to accurately describe~$\Lpq$.
This section depicts a very strong language property, taylored to
capture the structural complexity of~$\Lpq$.

Let us first define a (very) weak iteration property for languages:

\begin{definition}
A language~$L$ of~$\Ae$ is said to be \emph{left-iterable} if there
exist two words $u$ and~$v$ in~$\Ae$ such that~$u\xmd v^{i}$ is a
prefix of words in~$L$ for an infinite number of exponents~$i$.
\end{definition}

Of course, every rational or context-free language is left-iterable.
The definition is indeed designed above all for stating its negation.

\begin{definition}\ldefinition{blip}
A language~$L$ which is not {left-iterable} is said to have \emph{the
Bounded Left-Iteration Property}, or, for short, to be \emph{BLIP}.
\end{definition}

\begin{example}\lexample{blip}
  A very simple way of building BLIP languages is to consider 
  infinitely many prefixes of an infinite  and aperiodic word. 
  For instance the language~$\set{u_i}$, where~$u_0=\epsilon$ and~$u_{i+1}=u_{i}.1.0^i$;
  or the language of the finite powers of the Fibonacci 
  morphism~$\set{\sigma^i(0)}$ where~${\sigma(0)=01}$ and~${\sigma(1)=0}$.

  In order to build a less trivial example let us define the following family of
  functions~$f_i$:\\
  $$\begin{array}{lll}
    f_i~:~&n \mapsto n& \text{if }n\neq i \\
         & n \mapsto 0& \text{if }n = i. \\
  \end{array}$$
  The language~$\set{u_{i,j}}$, where~$u_{i,0}=1$ and~$u_{i,{j+1}} = u_{i,j}.1.0^{f_i(j)}$,
    is BLIP as can be easily checked.
\end{example}

Since \rdefinition{blip} was taylored for the study of~$\Lpq$, the
following holds, as essentially established in \cite[Lemma~8]{AkiyEtAl08}.

\begin{proposition}
\label{l.nat-lan-bli}
The language $\Lpq$ is BLIP.
\end{proposition}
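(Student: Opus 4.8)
The plan is to establish the negation of left-iterability directly, since that is exactly the \emph{BLIP} property of \rdefinition{blip}. So fix words $u$ and $v$ in $\Apqe$ with $v\neq\epsilon$ (for $v=\epsilon$ the condition is vacuous, so the definition is to be read with a non-empty iterated factor), and assume toward a contradiction that $u\xmd v^{i}$ is a prefix of a word of $\Lpq$ for infinitely many exponents $i$. Because $\Lpq$ is prefix-closed, this means $u\xmd v^{i}\in\Lpq$ for infinitely many $i$; and since the words $u\xmd v^{i}$ are nested (each one is a prefix of the next), prefix-closedness upgrades this at once to $u\xmd v^{i}\in\Lpq$ for \emph{every} $i\geq 0$.

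Next I set $N_i=\val{u\xmd v^{i}}$, so that each $N_i$ is a non-negative integer, and, since $\Lpq$ consists of the canonical representations (no leading zeros), $u=\cod{N_0}$ and $u\xmd v=\cod{N_1}$. The crux of the argument is to compare, for each $i$, the two elements $\cod{N_0}\xmd v^{i}=u\xmd v^{i}$ and $\cod{N_1}\xmd v^{i}=u\xmd v^{i+1}$ of $\Lpq$, which share the common suffix $v^{i}$ of length $i\wlen{v}$. Point (i) of \lemme{future} then gives $N_0\equiv N_1~[q^{\,i\wlen{v}}]$, and this congruence holds for all $i$. As the modulus $q^{\,i\wlen{v}}$ eventually exceeds $|N_0-N_1|$, it forces $N_0=N_1$.

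It remains to see that $N_0=N_1$ is impossible. Splitting off the last factor $v$ gives the value recurrence $N_1=\val{u\xmd v}=\left(\pq\right)^{\wlen{v}}N_0+\val{v}$, and with $N_0=N_1$ this rearranges to $N_0\bigl(q^{\wlen{v}}-p^{\wlen{v}}\bigr)=q^{\wlen{v}}\val{v}$. Since $p>q$ the left-hand side is $\leq 0$ while the right-hand side is $\geq 0$, so both vanish: $N_0=0$ and $\val{v}=0$. As $\Lpq$ has no leading zeros, $\val{u}=N_0=0$ forces $u=\epsilon$, whence $v=u\xmd v\in\Lpq$ has value $0$ and must itself equal $\epsilon$, contradicting $v\neq\epsilon$. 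Hence $\Lpq$ is not left-iterable, that is, it is \emph{BLIP}.

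The decisive idea is the choice of common suffix in the second paragraph: by iterating the \emph{same} block $v$ one makes arbitrarily high powers of $q$ appear in the modulus supplied by \lemme{future}, which is precisely the leverage that pins two genuinely distinct prefixes ($\cod{N_0}\xmd v^{i}$ and $\cod{N_1}\xmd v^{i}$) to the same integer value. The only step needing a little care is the routine reduction at the start — passing from ``prefix of a word of $\Lpq$ for infinitely many $i$'' to ``$u\xmd v^{i}\in\Lpq$ for all $i$'' via prefix-closedness and nesting, and identifying each such $u\xmd v^{i}$ with the canonical representation $\cod{N_i}$; once that is in place, the contradiction follows from \lemme{future} and the single inequality $p>q$.
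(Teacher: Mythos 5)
Your proof is correct and follows essentially the same route as the paper: reduce via prefix-closedness to $u\xmd v^{i}\in\Lpq$ for all $i$, then apply \lemme{future}\,(i) to the pair $\cod{N_0}\xmd v^{i}$, $\cod{N_1}\xmd v^{i}$ to force $\val{u}\equiv\val{uv}$ modulo arbitrarily high powers of $q$. The only difference is that you spell out explicitly why $\val{u}=\val{uv}$ is impossible (via the recurrence $N_1=(\pqs)^{\wlen{v}}N_0+\val{v}$), a step the paper dismisses as ``a contradiction''; this is a welcome, but not substantively different, elaboration.
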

\begin{proof}
  If~$\Lpq$ were left iterable, there would exist two nonempty
  words~$u$ and~$v$ such that~$u\xmd v^i$ is prefix
    of a word of~$\Lpq$ for infinitely many~$i$.
    Since~$\Lpq$ is prefix-closed, the word~$u\xmd v^i$ would be itself in~$\Lpq$,
    for all~$i$.
    From \rlemma{future}, it follows that the integers~$\val{u}$ and~$\val{uv}$
      are congruent modulo~$q^k$, for all~$k$, a contradiction.
\end{proof}

Being BLIP is a very stable property for languages, as expressed by
the following properties.

\begin{lemma}\llemma{blip}
  \begin{minipage}[t]{0.8\linewidth}
    \begin{enumerate}[label=(\roman{*})]
      \item {Every finite language is BLIP.}
      \item {Any finite union of BLIP languages is BLIP.}
      \item {Any intersection of BLIP languages is BLIP.}
      \item {Any sublanguage of a BLIP language is BLIP.}
    \end{enumerate}
  \end{minipage}
\end{lemma}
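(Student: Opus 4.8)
The plan is to deduce all four items from a single monotonicity observation about left-iterability, supplemented by two elementary counting arguments; there is no deep obstacle here, only one point of care about the definition. The monotonicity observation is this: if $L' \subseteq L$ and $L'$ is left-iterable, then so is $L$. Indeed, a pair of witnessing words $u,v$ for $L'$ works verbatim for $L$, because every word of $L'$ admitting $u\xmd v^{i}$ as a prefix also lies in $L$. Contraposing, \emph{any sublanguage of a BLIP language is BLIP}, which is exactly item~(iv). Item~(iii) then comes for free: an intersection $\bigcap_k L_k$ is contained in each $L_k$, so if at least one member of the family is BLIP, the intersection is a sublanguage of that BLIP language and hence BLIP by~(iv).

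For item~(i) I would argue by word length. Reading \rdefinition{blip} with $v$ nonempty (see below), the candidate prefixes $u\xmd v^{i}$ have length $\wlen{u}+i\,\wlen{v}$, which tends to infinity as $i$ grows. A finite language $L$ has a longest word, say of length $N$, and any prefix of a word of $L$ has length at most $N$; so $u\xmd v^{i}$ can be a prefix of a word of $L$ for at most finitely many $i$. Hence no pair $(u,v)$ can witness left-iterability, and $L$ is BLIP.

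The only item requiring an actual idea is~(ii), and this is where I would locate the crux of the argument. Let $L = L_1 \cup \cdots \cup L_n$ with each $L_j$ BLIP, and suppose for contradiction that $L$ is left-iterable, witnessed by words $u,v$ and an infinite set $I$ of exponents. For each $i \in I$ the word $u\xmd v^{i}$ is a prefix of some word of $L$, hence of some word of one of the finitely many languages $L_1,\dots,L_n$. By the pigeonhole principle, a single index $j_0$ serves for infinitely many $i \in I$; for that $j_0$, the word $u\xmd v^{i}$ is a prefix of a word of $L_{j_0}$ for infinitely many $i$, so $L_{j_0}$ is left-iterable, contradicting the hypothesis. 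The one subtlety to keep in mind throughout is the intended reading of \rdefinition{blip}: the witness $v$ must be taken nonempty, exactly as in the proof that $\Lpq$ is BLIP. Otherwise $u\xmd v^{i}=u$ for every $i$, every nonempty language would be vacuously left-iterable, and all four statements would collapse; in particular item~(i) would already fail.
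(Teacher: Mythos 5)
Your proof is correct. The paper states this lemma without any proof, treating all four items as immediate, so there is nothing to compare against; your arguments --- monotonicity of left-iterability under inclusion for (iii) and (iv), a length bound for (i), and pigeonhole for (ii) --- are the natural ones and are all sound. Your remark that the witness $v$ in \rdefinition{blip} must be read as nonempty is well taken: as literally stated the definition would make every nonempty language left-iterable, and the paper itself silently adopts the nonempty reading in its proof that $\Lpq$ is BLIP.
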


Of course, BLIP languages are not closed under complementation, star or
transposition.

The bounded left iteration property can be expressed 
with the more classical notion of IRS language (for Infinite Regular Subset)
that has been introduced by Sheila Greibach in her study of the family of 
context-free languages (\cite{Grei1975a}, \cf also~\cite{AuteEtAl1982a}).
A language is IRS if it does not contain any infinite rational sublanguage.
For instance, the language~$\set{a^n~|~n\text{ is a prime number}}$ is IRS
(but not BLIP).

It is immediate that a BLIP language is IRS; even that a BLIP language contains
no  infinite context-free sublanguage.
However the converse is not true as seen with the above example.
More precisely, the following statement holds:

\begin{proposition}\lproposition{blip=p-irs}
  A language~$L$ is BLIP if, and only if,~$\Pref{L}$ is IRS.
\end{proposition}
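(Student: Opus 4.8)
The plan is to prove both implications by contraposition, exploiting the single structural fact that $\Pref{L}$ is, by its very definition, closed under taking prefixes. For the direction ``$L$ BLIP $\Rightarrow \Pref{L}$ IRS'', I would assume $\Pref{L}$ is \emph{not} IRS and manufacture a witness of left-iterability for $L$. By hypothesis $\Pref{L}$ contains an infinite rational language $R$; being rational and infinite, $R$ satisfies the classical pumping lemma, so there are words $x,y,z$ with $y\neq\epsilon$ and $x\xmd y^{i}\xmd z\in R$ for every $i\geq 0$. Since $R\subseteq\Pref{L}$ and $\Pref{L}$ is prefix-closed, each prefix $x\xmd y^{i}$ also lies in $\Pref{L}$, \ie is a prefix of a word of $L$. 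Taking $u=x$ and $v=y$ then exhibits $L$ as left-iterable, the exact negation of BLIP.

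For the converse ``$\Pref{L}$ IRS $\Rightarrow L$ BLIP'', again in contrapositive form, I would start from a left-iterable $L$: there are words $u$ and $v$, with $v\neq\epsilon$, and an infinite set $I\subseteq\N$ of exponents such that $u\xmd v^{i}$ is a prefix of a word of $L$ for every $i\in I$. The crucial step is to upgrade ``infinitely many exponents'' to ``all exponents'': given any $j\in\N$, the infiniteness of $I$ yields some $i\in I$ with $i\geq j$, so $u\xmd v^{j}$ is a prefix of $u\xmd v^{i}\in\Pref{L}$, and prefix-closedness of $\Pref{L}$ forces $u\xmd v^{j}\in\Pref{L}$. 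Hence $u\xmd v^{*}\subseteq\Pref{L}$, and since $v\neq\epsilon$ this is an infinite rational sublanguage of $\Pref{L}$, contradicting the IRS hypothesis.

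The main obstacle I anticipate is precisely this passage from ``a prefix for infinitely many $i$'' to ``a prefix for all $i$'': the set $\set{u\xmd v^{i}\mid i\in I}$ need not itself be rational, so it cannot be used directly as the required infinite rational sublanguage. What rescues the argument is that prefix-closure lets one replace it by the full language $u\xmd v^{*}$, which is patently rational. I would also take care to record the two nonemptiness conditions that keep the statement non-vacuous — the pumped factor $y$ produced by the pumping lemma and the period $v$ appearing in left-iterability must both be nonempty, so that the rational languages built on each side are genuinely infinite.
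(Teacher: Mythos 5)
Your proof is correct and takes essentially the same route as the paper, which condenses the argument into a chain of equivalences ($\Pref{L}$ not IRS $\iff$ it contains some $u\xmd v^*w$ $\iff$ $u\xmd v^*\subseteq\Pref{L}$ $\iff$ $L$ not left-iterable) resting on exactly the two facts you isolate: the pumping lemma for infinite rational languages and prefix-closedness of $\Pref{L}$ to pass between ``infinitely many exponents'' and all of $u\xmd v^*$. Your version merely spells out the justifications (and the nonemptiness of the pumped factor) that the paper leaves implicit.
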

\begin{proof}~\\

  \vspace*{-3em}\begin{align*}
    \Pref{L}\text{ is not IRS}
    &\iff \Pref{L}\text{ contains a sublanguage of the form}~u\xmd v^*w\\
    &\iff uv^*\text{ is a sublanguage of }\Pref{L} \\
    &\iff \text{for infinitely many }i,~u\xmd v^i\text{ is prefix of a word of }L\\
    &\iff L\text{ is not BLIP}
  \end{align*}
\end{proof}

\rproposition{blip=p-irs} shows that BLIP and IRS are equivalent properties on 
prefix-closed languages, which means that IRS is indeed a very strong property
for prefix-closed languages.

\medskip

Even though the purpose of this work is to prove~\theor{fin-gen-mon},
we actually prove a stronger version of it:
\begin{theorem}\ltheorem{blip-fin-gen}
 The set of the $\base$-representations of any finitely generated
  additive submonoid of~$\Vpq$ is a BLIP language.
\end{theorem}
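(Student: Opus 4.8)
The plan is to combine the three ingredients announced in the introduction. Write $\Lc_{t}:=\setq{\cod{x}}{x\in t+\N}$ for the language of $\pq$-representations of a translate $t+\N$ with $t\in\Vpq$. The strategy has two halves: \emph{(a)} reduce an arbitrary finitely generated submonoid $G$ to finitely many such translates, so that $L_G$ sits inside a finite union $\bigcup_j\Lc_{t_j}$ together with a finite language; and \emph{(b)} prove that each $\Lc_{t}$ is BLIP. The conclusion then follows from the stability of BLIP under finite unions and sublanguages (\rlemma{blip}).

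For \emph{(a)}, let $G=\ssm{x_1,\ldots,x_s}$ with each $x_i\in\Vpq$. Since every element of $\Vpq$ has denominator dividing a power of $q$, I can write $x_i=\frac{c_i}{q^k}$ for a common $k$, so that $G=\frac1{q^k}M$ where $M=\ssm{c_1,\ldots,c_s}$ is a numerical submonoid of $(\N,+)$. By the standard structure of numerical monoids, $M$ is cofinite in $d\N$ where $d=\gcd(c_i)$, and I may take the cofiniteness threshold $T$ larger than the bound $m_k$ of \rlemma{thr-esh}. Then $G$ is the union of a finite set $F\subseteq\Vpq$ (the elements of $M$ below $T$, rescaled) and finitely many arithmetic progressions of common step $d/q^k$, all of whose elements exceed $m_k/q^k$. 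Grouping the elements of those progressions by their fractional part modulo $1$—of which there are at most $q^k$—each class is an arithmetic progression of \emph{integer} step, hence contained in a single translate $t_j+\N$ whose least element $t_j$ lies in $\Vpq$ by \rlemma{thr-esh}. Thus $G\subseteq F\cup\bigcup_j(t_j+\N)$, so $L_G\subseteq L_F\cup\bigcup_j\Lc_{t_j}$, and since $L_F$ is finite it remains to treat the translates.

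For \emph{(b)}, I use the incrementer of \rsection{incrementer}: the transducer $\addconst{-t}$ realising subtraction of $t$ is a complete sequential letter-to-letter right transducer, and it maps $\Lc_{t}$ \emph{onto} $\Lpq$ because $\cod{t+n}\mapsto\cod{n}$. The whole argument then rests on one stability statement: a complete sequential letter-to-letter right transducer $\Sc$ sends left-iterable languages to left-iterable languages. Granting this, if some $\Lc_{t}$ were left-iterable then $\Sc(\Lc_{t})=\Lpq$ would be left-iterable, contradicting the fact that $\Lpq$ is BLIP (\rlemma{nat-lan-bli}); hence every $\Lc_{t}$ is BLIP and the theorem follows.

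The main obstacle is precisely this stability lemma, whose difficulty is the tension between BLIP, a property of \emph{prefixes}, and a right transducer, which consumes its input from the \emph{right}. Suppose $u\,v^{i}\in\Pref{L}$ for infinitely many $i$, witnessed by words $u\,v^{i}\,x_i\in L$ with $v$ nonempty. Running $\Sc$ from the right, I first read each $x_i$; by finiteness of the state set I restrict to an infinite set of indices on which the state $r$ reached after reading $x_i$ is constant. Reading the block $v$ repeatedly from $r$ produces, by the pigeonhole principle, a sequence of states that is eventually periodic, and the corresponding output blocks are eventually periodic as well—nonempty because the machine is letter-to-letter. As the leftmost copies of $v$ are read \emph{last}, they fall in the periodic regime; after a further pigeonhole restriction making the output over $u$ and the final word constant, each $\Sc(u\,v^{i}\,x_i)$ acquires a prefix of the shape $u'\,(v')^{m_i}$ with a fixed nonempty $v'$ and $m_i\to\infty$. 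This exhibits the left-iterability of $\Sc(L)$ and closes the argument. The delicate points, which I expect to be where the real work lies, are tracking the left-to-right position of the output blocks against the right-to-left reading order, and arranging the nested restrictions so that $u'$ and $v'$ are genuinely independent of $i$.
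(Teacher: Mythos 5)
Your overall architecture coincides with the paper's: your step \emph{(a)} is exactly \propo{fin-gen} (your detour through numerical monoids proves slightly more than is needed, and the least element $t_j$ of each class already lies in $\Vpq$ as an element of $G$, so \lemme{thr-esh} is not required there), and the final assembly via \lemme{blip} is identical. The genuine divergence --- and the genuine gap --- is in step \emph{(b)}. You reduce the BLIP-ness of $\Lc_{t}$ to two pieces of machinery that the paper never builds and that you do not build either: a subtraction transducer ``$\addconst{-t}$'' and a stability lemma for \emph{arbitrary} complete sequential letter-to-letter right transducers. The incrementer $\addconst{w}$ is only defined for $w\in\Apqe$, i.e.\ for adding nonnegative values; a subtracter would have to be constructed from scratch (signed carries, negative states, a final function undefined on negative residual states, and an image which is not $\Lpq$ but $\Lpq$ padded with boundedly many leading zeros, since a letter-to-letter machine cannot shorten its input). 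None of this is fatal, but it is real work, and your stability lemma is strictly harder than the one the paper proves: \propo{suc-cit} exploits the fact that in the incrementer every transition satisfies $s\geq s'$, so the state sequence along $v^{i}$ is eventually \emph{constant} and no pigeonhole on cycles is needed. For a subtracter the carry states can oscillate (for instance between $-1$ and $0$), so you would genuinely need the eventual-periodicity argument you only sketch, including the phase bookkeeping modulo the period that you yourself flag as ``where the real work lies''.

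The paper sidesteps both difficulties with one trick you missed: instead of subtracting $t$ to map $\Lc_{t}$ \emph{onto} $\Lpq$, it \emph{adds more}. In \propo{bli-ima-bli}, given $x=\val{w}=\frac{n}{q^k}$, one uses \lemme{thr-esh} to find $y=\frac{j}{q^k}$ in $\Vpq$ with $x+y\in\N$; then $\addconst{u}$ with $u=\pqRep{y}$ maps $L'=\addconst{w}(\Lpq)$ \emph{into} $\Lpq$, so $\addconst{u}(L')$ is BLIP as a sublanguage of $\Lpq$, and if $L'$ were left-iterable then so would $\addconst{u}(L')$ be, by the easy, monotone \propo{suc-cit}. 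This uses only the already-constructed incrementer. Your route can in principle be completed, but as written it rests on an unconstructed transducer and an unproved general lemma.
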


This is not a minor improvement, as it shows that every language
representing a finitely generated monoid is basically as complex as~$\Lpq$.

\section{The incrementer}\lsection{incrementer}

The purpose of this section is to build a letter-to-letter sequential 
right transducer~$A_p \rightarrow A_p$
realising a constant addition:
given as \emph{parameter} a word~$w$ of~$A_p^*$ it would perform the
application~$u\mapsto v$, such that~$\val{v}=\val{u}+\val{w}$.
This transducer is based on the converter defined in \cite{FrouSaka10hb}
that we recall in \rdefinition{con-ver}, below.


\begin{theorem}[\cite{AkiyEtAl08},\cite{FrouSaka10hb}]
Given any digit alphabet~$A_n$, there exists a finite letter-to-letter
right sequential transducer~$\converter{n}$ from~$A_n$ to~$\Apq$ such
that for every~$w$
in ${A_n}^*$, $\val{\converter{n}(w)}=\val{w}$.
\end{theorem}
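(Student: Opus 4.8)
The plan is to construct $\converter{n}$ explicitly as a carry-propagating machine that, reading from the least significant digit, rewrites each digit of $A_n$ into a digit of $A_p=\Apq$ and stores the overflow in a bounded integer carry. Write the input as $w=d_{\ell-1}\cdots d_1 d_0$ with each $d_j$ in $A_n$, so that $q\,\val{w}=\sum_{j} d_j\left(\pq\right)^{j}$. I would take as state set a finite collection of non-negative integers (the carries), with $0$ as the initial state, and maintain the invariant that after the machine has consumed the suffix $d_{i-1}\cdots d_0$, output a word $a_{i-1}\cdots a_0$ over $A_p$, and reached the carry $c_i$, one has
\[
  \sum_{j=0}^{i-1} d_j\left(\pq\right)^{j} \;=\; \sum_{j=0}^{i-1} a_j\left(\pq\right)^{j} \;+\; c_i\left(\pq\right)^{i}.
\]
For $i=0$ both sides vanish, so the invariant holds at the start.

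The single transition rule is then read off from this invariant. On reading a digit $d$ in carry state $c$, I set $s=d+c$, emit the output digit $a=s\bmod p$ (the output function $\eta$), and move to the new carry $c'=q\,\floor{s/p}$ (the transition function $\delta$). Writing the increment of the invariant from $i$ to $i+1$, dividing by $\left(\pq\right)^{i}$ and clearing denominators gives exactly the identity $q(d+c)=q\,a+p\,c'$, so the rule preserves the invariant; and since $\gcd(p,q)=1$ this congruence forces $a$ to be the residue of $d+c$ modulo $p$, which makes the machine deterministic. For finiteness I would bound the carry: if $c\leq C$ and $d\leq n-1$ then $c'=q\floor{(d+c)/p}\leq q(n-1+C)/p$, and this is again $\leq C$ as soon as $C\geq q(n-1)/(p-q)$; because $p>q$ this is a genuine finite bound, so $C=\ceil{q(n-1)/(p-q)}$ yields a finite state set in which every reachable carry is a non-negative multiple of $q$.

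It remains to fix the final function, which every state will have (so that $\converter{n}$ is complete): I set $\omega(c)=\pqRep{c/q}$. When the whole word has been read the machine stops in some carry $c_\ell$, and the invariant leaves the residual term $c_\ell\left(\pq\right)^{\ell}$ that must still be emitted as the most significant digits. This is where divisibility by $q$ is used: since $c_\ell$ is a multiple of $q$, the number $c_\ell/q$ is an ordinary integer, hence lies in $\N\subseteq\Vpq$, so $\pqRep{c_\ell/q}$ is a well-defined word of $A_p^*$ whose value, placed at positions $\geq\ell$, contributes precisely the missing $c_\ell\left(\pq\right)^{\ell}$; adding it to the transition output yields total value $\val{w}$. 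The main obstacle is exactly the boundedness of the carry, the one point where $p>q$ enters and the reason $\converter{n}$ is \emph{finite}; the accompanying subtlety is verifying that the leftover carry is always divisible by $q$, so that the final function can realise it with digits inside $A_p$ through $\N\subseteq\Vpq$.
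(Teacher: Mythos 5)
Your construction is correct and is essentially the paper's own converter: your transition identity $q(d+c)=q\xmd a+p\xmd c'$ becomes the paper's $q\xmd s+a=p\xmd s'+c$ under the relabelling $c=q\xmd s$ of your carries by the paper's states, and your final function $\omega(c)=\pqRep{c/q}$ is then exactly the paper's $\omega(s)=\pqRep{s}$. The only substantive difference is that you prove the boundedness of the carry (hence finiteness of the reachable part) explicitly, a point the paper delegates to the cited references.
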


\begin{definition}
	\label{d.con-ver}%
For every integer $n$, the \emph{converter}
$\converter{n} = \aut{\N, A_n, \Apq, 0, \delta,\eta, \omega}$,
is the right transducer
with input alphabet~$A_{n}$, output
alphabet~$\Apq$, and whose
transition and output functions are defined by:
\begin{equation}
\forall s\in\N\quantvrg\forall a\in A_n\quantsp
s \pathaut{a|c}{} s' \iff q\xmd s + a = p\xmd s' + c
\eqvrg
\eee\ee
\notag
\end{equation}
and
final function by:
$\omega(s) = \pqRep{s}$,
for every state~$s$ in~$\N$.
\end{definition}

\defin{con-ver} describes a transducer with
an infinite number of
states, but its reachable part is finite
(cf~\cite[Proposition~13]{AkiyEtAl08} or \cite[Section~2.2.2]{FrouSaka10hb}).
In particular, if~${n = 2p-1}$, the converter is in fact an
additioner: given two words $u=a_n\cdots\xmd a_2\xmd a_1$
and $v=b_n\cdots\xmd b_2\xmd b_1$ over $\Apq$,
the digit-wise addition yields the word~$(a_n+b_n)\cdots(a_1+b_{1})$
over~$A_{2p-1}$ which is transformed by~$\converter{2p-1}$
into~$\pqRep{\val{u}+\val{v}}$. The converter from~$A_{5}$ to $A_{3}$ in base 
$\tds$ is shown at \figur{Z53}.%

\begin{figure}[ht]
  \centering
  \input{converter}
  \caption{The converter $\converter[\frac{3}{2}]{5}$}
  \label{f.Z53}
\end{figure}

For every word $w$ of $\Apqe$, we define a letter-to-letter sequential 
right transducer $\addconst{w}$
which increments the input by $w$, \ie given a word $u$ as input, it
outputs the $\base$-representation $\pqRep{\val{u}+\val{w}}$.
It is obtained as a specialisation of~$\converter{2p-1}$.

\begin{definition}
	\label{d.inc-rem}%
For every $w=b_{n-1}\cdots\xmd b_1 \xmd b_{0}$ in~$\Apqe$,
the \emph{incrementer}\\
\PushLine
$\addconst{w}=\aut{\N \times \{0,1,\ldots,n\},\Apq,\Apq,(0,0),\delta',\eta',\psi}$
\PushLine\\
is the (right) transducer
with input and output alphabet~$\Apq$, and whose
transition and output functions are defined by:
\begin{alignat}{4}
\forall s\in\N\quantvrg\forall a\in\Apq\quantvrg\e& \notag \\
\forall i<n\phantom{\quantvrg}\e &
(s,i)\pathaut{a|c}{}(s',i+1)& \e&\iff &
q\xmd s + (a+b_i) &= p\xmd s' + c  
\e
\notag
\\
& (s,n) \pathaut{a|c}{}(s',n)  & &\iff &
q\xmd s + a &= p\xmd s'+ c 
\notag
\end{alignat}
and whose
final function is defined by:
\begin{align}
\forall s\in\N\quantsp
\psi((s,n)) &= \pqRep{s} \EqVrgInt
\notag
\\
\psi((s,i)) &= \psi((s',i+1)).c
\e\text{ if }\e i<n
\e\text{and}\e (s,i) \path{0|c}(s',i+1)
\notag
\end{align}
\end{definition}

This last line means that if the input word is shorter than $w$, then
the final function behaves as if the input word ended with enough 0's (on the 
left, since we read from right to left).
\defin{inc-rem} describes a transducer with an infinite number of
states but, as in the case of the converter, it is easy to verify
that its reachable part is finite.
The incrementer $\addconst[{\base[3][2]}]{121}$ is shown at \figur{incr3125}.

\begin{figure}[ht]
\centering

\noindent\begin{subfigure}[b]{0.70\textwidth}
  \centering
    \scalebox{0.9}{\input{incr_3.125}}
\end{subfigure}\hfill%
\begin{subfigure}[b]{0.20\textwidth}
  \centering
    \scalebox{0.9}{\input{converter}}
\end{subfigure}

\caption{The incrementer $\addconst[{\base[3][2]}]{121}$}

\label{f.incr3125}
\end{figure}

It is  a simple verification that the incrementer has the
expected behaviour.

\begin{proposition}
	For every~$u$ and~$w$ in~$\Apqe$,
	$v=\addconst{w}(u)$ is a word in~$\Apqe$ such that
	$\val{v}=\val{u}+\val{w}$ holds.
\end{proposition}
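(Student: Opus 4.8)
The plan is to reduce the statement to the defining property of the converter $\converter{2p-1}$ recalled above, namely that $\val{\converter{2p-1}(t)}=\val{t}$ for every word $t$ over $A_{2p-1}$. Indeed, by construction the incrementer $\addconst{w}$ is nothing but $\converter{2p-1}$ in which each state is split according to a position counter $i\in\set{0,\ldots,n}$ and in which, at each position $i<n$, the digit $b_i$ of $w=b_{n-1}\cdots b_0$ is silently added to the input letter before the Euclidean-division step. So I would first make this correspondence precise. Writing $u=a_m\cdots a_1 a_0$, and padding the shorter of $u$ and $w$ with leading zeros so that both have length $N=\max(m+1,n)$, I define the word $t=c_{N-1}\cdots c_0$ over $A_{2p-1}$ by $c_i=a_i+b_i$. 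This is legitimate because $a_i+b_i\leq 2(p-1)=2p-2$, which is the largest letter of $A_{2p-1}$.

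Next I would observe that $\val{t}=\val{u}+\val{w}$. This is immediate from the evaluation formula, since leading zeros do not change a value and $\pi$ is additive in the digits: $\val{t}=\sum_{i}\frac{a_i+b_i}{q}\left(\pq\right)^i=\val{u}+\val{w}$.

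The heart of the argument is to check that $\addconst{w}(u)=\converter{2p-1}(t)$. Reading from the right, the incrementer is in state $(s,i)$ exactly when it is about to read $a_i$; for $i<n$ its transition on $a_i$ is governed by $q\xmd s+(a_i+b_i)=p\xmd s'+c$, which is precisely the transition of $\converter{2p-1}$ on the letter $c_i=a_i+b_i$, whereas for $i\geq n$ (the position component being frozen at $n$) it reproduces the plain conversion step of $\converter{2p-1}$ on $c_i=a_i$. Since both transducers are sequential and their transitions agree letter by letter (the quotient $s'$ and remainder $c$ being uniquely determined), the two runs coincide and emit the same output as long as there is input to read. The one point that needs genuine care is the case $m+1<n$, where $u$ runs out before position $n$ is reached: the run then halts in a state $(s,i)$ with $i<n$, and the emission of the high-order digits of $t$ corresponding to the leading zeros of $u$ is taken over by the recursive clause $\psi((s,i))=\psi((s',i+1)).c$ with $(s,i)\path{0|c}(s',i+1)$. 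I would verify that unfolding this recursion from $(s,i)$ up to position $n$ performs exactly the transitions that $\converter{2p-1}$ would perform on the padding letters $c_{n-1}\cdots c_{m+1}=b_{n-1}\cdots b_{m+1}$, and terminates with $\psi((s_n,n))=\pqRep{s_n}=\omega(s_n)$; thus the final function reconstructs precisely the missing high-order part of $\converter{2p-1}(t)$. This matching of the final function of $\addconst{w}$ with a run of $\converter{2p-1}$ on the zero-padding of $u$ is the only delicate step.

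Combining the three points yields $\val{v}=\val{\addconst{w}(u)}=\val{\converter{2p-1}(t)}=\val{t}=\val{u}+\val{w}$, and $v\in\Apqe$ because the output alphabet of the converter is $\Apq$ and each final value $\pqRep{s}$ is a word over $\Apq$. The degenerate case $n=0$ (that is, $w=\epsilon$) is covered by the same argument, the incrementer then reducing to $\converter{2p-1}$ restricted to inputs over $\Apq$, for which the claim reads $\val{v}=\val{u}$.
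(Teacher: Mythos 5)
Your argument is correct and is precisely the ``simple verification'' the paper alludes to without writing out: the text introduces $\addconst{w}$ explicitly as a specialisation of $\converter{2p-1}$ and notes that the converter on the digit-wise sum word realises addition, which is exactly the reduction you carry out (including the one genuinely delicate point, the unfolding of the final function $\psi$ when the input is shorter than $w$). No discrepancy with the paper's approach.
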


\section{Proof of \protect\rtheorem{blip-fin-gen}}

The core of the proof lies in the next statement.

\begin{proposition}
\label{p.suc-cit}
For every~$w$ in~$\Apqe$, the image of a left-iterable language
by~$\addconst{w}$ is left-iterable.
\end{proposition}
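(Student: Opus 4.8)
The plan is to exploit that $\addconst{w}$ is a \emph{complete, letter-to-letter, right sequential} transducer whose reachable part is finite (as noted after \rdefinition{inc-rem}), and to follow the state of $\addconst{w}$ as it climbs through an iterated block towards the high-order letters. Since left-iterability concerns \emph{left} factors while the machine reads from \emph{right} to left, the whole argument hinges on this bookkeeping. Write $\addconst{w}=\aut{Q,\Apq,\Apq,\iota,\delta',\eta',\psi}$ and, for a state $s$ and a word $z\in\Apqe$, let $\delta'(s,z)$ be the state reached by reading $z$ from right to left starting in $s$, and $\eta'(s,z)$ the output factor it produces (of length $\wlen{z}$). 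Assume $L$ is left-iterable, witnessed by $u$ and a \emph{nonempty} word $v$ (if $v=\epsilon$ the property is vacuous, and the image is nonempty since $\addconst{w}$ is complete, hence trivially left-iterable). Thus there is an infinite set $I$ of exponents and, for each $i\in I$, a word $x_i=u\,v^i\,y_i\in L$.

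First I would analyse a single run. Put $R_v(s):=\delta'(s,v)$ and $R_u(s):=\delta'(s,u)$. Reading $x_i$ from the right, $\addconst{w}$ processes $y_i$ and reaches an \emph{entry state} $e_i:=\delta'(\iota,y_i)$, then traverses the $i$ copies of $v$ through the states $R_v^{k}(e_i)$ for $0\leq k\leq i$, then reads $u$ and ends in $R_u(R_v^{i}(e_i))$, whereupon $\psi$ contributes the leftmost factor. As outputs are placed position by position, the image decomposes (from left to right) as
\[
\addconst{w}(x_i)\;=\;\psi\bigl(R_u(R_v^{i}(e_i))\bigr)\cdot\eta'\bigl(R_v^{i}(e_i),u\bigr)\cdot\underbrace{\eta'\bigl(R_v^{i-1}(e_i),v\bigr)\cdots\eta'\bigl(e_i,v\bigr)}_{\text{the }v^i\text{ block}}\cdot\,\eta'(\iota,y_i).
\]
In particular, just to the right of the factor $\psi(\cdot)\,\eta'(\cdot,u)$ one reads $\eta'(R_v^{i-1}(e_i),v)\,\eta'(R_v^{i-2}(e_i),v)\cdots$, with \emph{decreasing} exponent as one moves rightwards.

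Now I would extract the iterated factor by two pigeonhole reductions. As $Q$ is finite, an infinite $I'\subseteq I$ has $e_i$ equal to a fixed state $e^{*}$; on $I'$ the trajectory $s_k:=R_v^{k}(e^{*})$ no longer depends on $i$ and, lying in a finite set, is eventually periodic: $s_{k+P}=s_k$ for all $k\geq N$, with $P\geq 1$. Hence the topmost state $s_i=R_v^{i}(e^{*})$ depends only on $i\bmod P$ once $i\geq N$, so a second pigeonhole gives an infinite $I''\subseteq I'$ on which $s_i$ equals a fixed state $s^{**}$. For $i\in I''$ the two leftmost factors are the fixed words $\psi(R_u(s^{**}))$ and $\eta'(s^{**},u)$; let $U$ be their concatenation. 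Moreover the block states $s_{i-1},s_{i-2},\dots$ lie in the periodic regime, so the word
\[
V\;:=\;\eta'(s_{i-1},v)\,\eta'(s_{i-2},v)\cdots\eta'(s_{i-P},v)
\]
is independent of $i\in I''$ as soon as $i-P\geq N$ (it depends only on the fixed residue $i\bmod P$), and is nonempty because it is the output of $P$ copies of the nonempty word $v$ under a letter-to-letter transducer, whence $\wlen{V}=P\,\wlen{v}\geq 1$. Peeling $m_i:=\floor{(i-N)/P}$ full periods off the top of the $v^i$ block shows that $U\,V^{m_i}$ is a prefix of $\addconst{w}(x_i)\in\addconst{w}(L)$, with $m_i\to\infty$ along $I''$. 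Therefore $\addconst{w}(L)$ is left-iterable, witnessed by $U$ and $V$.

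The step I expect to be most delicate is precisely the order bookkeeping: because reading is right-to-left while left-iterability concerns left factors, one must verify that it is the periodicity of the states \emph{near the high-order end} of $v^i$ that surfaces as a genuine iterated prefix. Concretely, the care lies in checking that the factor $V$ assembled from $s_{i-1},\dots,s_{i-P}$ is the same word for all large $i\in I''$ — this is exactly why both reductions (fixing $e_i$ and fixing $i\bmod P$) are needed — and in confirming $V\neq\epsilon$, which uses that $v$ is nonempty and that the transducer is letter-to-letter. Everything else is routine finite-automaton reasoning.
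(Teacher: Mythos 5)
Your proof is correct, and while its skeleton (follow the right-to-left run on $u\xmd v^{i}y_{i}$, pigeonhole on the state reached after $y_{i}$, analyse the trajectory through the block of $v$'s) matches the paper's, the decisive step is handled by a genuinely different and more general argument. The paper exploits an arithmetic property of the incrementer: once the $w$-counter in the state is saturated, a transition from carry $s$ to carry $s'$ reading $a$ and writing $c$ satisfies $q\xmd s+a=p\xmd s'+c$ with $a,c<p$ and $q<p$, hence $s'\leq s$; the carry is therefore non-increasing along the reading of $v^{i}$, so the trajectory is eventually \emph{stationary} at a single state on which $v$ labels a true self-loop with output $v'$. After absorbing a bounded prefix of the $v$-powers into $y_{i}$, the image is literally $u'\xmd v'^{i}\xmd y'_{i}$ and left-iterability is immediate. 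You instead use only the eventual \emph{periodicity} of the iteration of $\delta'(\cdot,v)$ on a finite state set, and pay for this weaker structural fact with a second pigeonhole (fixing $i\bmod P$) and the assembly of the period-long output word $V$, checking that $U\xmd V^{m_{i}}$ is a prefix with $m_{i}\to\infty$. Both arguments are sound; yours establishes the stronger statement that \emph{any} complete letter-to-letter right sequential transducer with finite reachable part preserves left-iterability, with no appeal to the arithmetic of the converter, whereas the paper's monotonicity observation buys a one-line stabilisation argument and an output exactly of the form $u'v'^{i}y'_{i}$. The two points you flag as delicate --- that $V$ is the same word for all sufficiently large $i$ in $I''$, and that $V\neq\epsilon$ because the transducer is letter-to-letter and $v$ is nonempty --- are indeed the ones that need care, and you handle both correctly.
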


\begin{proof}
Let~$u$ and~$v$ be in~$\Apqe$, $I\subseteq\N$ an infinite set of
indexes and $\{y_{i}\}_{i\in I}$ an infinite family of words
in~$\Apqe$.
The proof consists in showing that~$\Defi{\addconst{w}(u\xmd v^{i}y_{i})}{i\in I}$
is left-iterable.

Since~$I$ is infinite, we may assume, without loss of generality,
that the length of the~$y_{i}$'s is strictly increasing hence,
  that  all~$y_{i}$'s have a length greater \linebreak than~${n=\wlen{w}}$
but also that the reading of every~$y_{i}$ leads~$\addconst{w}$ to a
\emph{same} state~$(s,0)$:
\begin{equation}
\forall s\in\N\quantvrg\fa i\in I\quantsp
(0,n) \pathaut{y_{i}|y'_{i}}{\addconst{w}} (s,0)
\eqpnt
\eee\eee
\notag
\end{equation}
{}From the definition of the transitions of~$\addconst{w}$:
\begin{equation}
(s,0) \pathaut{a|c}{}(s',0)  \e\iff\e q\xmd s + a = p\xmd s'+ c
\eqvrg
\notag
\end{equation}
follows, since~$a<p$ and~$q<p$, that~$s\geq s'$.
Hence, the sequence of (first component of) states of~$\addconst{w}$ in a
computation starting in~$(s,0)$ and with input~$v^{i}$, with
unbounded~$i$, is ultimately stationary at state~$(t,0)$.

Without loss of generality, we thus may assume
that~$(0,n)\pathaut{y_{i}|y'_{i}}{}(t,0)$ for every~$i$ in~$I$ and,
since~$(t,0)\pathaut{v|v'}{}(t,0)$, it
holds that~$\addconst{w}(u\xmd v^{i}y_{i})=u'\xmd v'^{i}y'_{i}$,
where~$u'$ is the output of a computation starting in~$(t,0)$ and
with input~$u$.
\end{proof}

The special case of additive submonoids of~$\Vpq$ allows us to reverse the
condition from left-iterable to BLIP:
%

\begin{proposition}
\label{p.bli-ima-bli}
  Let~$w$ be a word of~$\Apqe$, and~$L$ be a BLIP language such 
  that~$\val{L}$
  is an additive submonoid of~$\Vpq$.
  The language $\addconst{w}(L)$ is BLIP.
\end{proposition}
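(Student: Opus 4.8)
The plan is to prove the statement by contradiction, reducing it to \propo{suc-cit}. Write $M=\val{L}$ for the submonoid and suppose that $\addconst{w}(L)$ is \emph{left-iterable}. Because the representation of a number in base~$\base$ is unique up to leading zeros, the hypothesis $\val{L}=M$ forces $L$ to be, up to leading zeros, exactly the set $\setq{\cod{x}}{x\in M}$ of canonical representations of the elements of $M$; this is the only use of the submonoid hypothesis to pin down $L$. The idea is then to build a \emph{second} incrementer $\addconst{w'}$ such that $\addconst{w'}\circ\addconst{w}$ adds a constant $\kappa$ that belongs to $M$: the composite will send $L$ back into itself, so its image is a sublanguage of the BLIP language $L$ and hence BLIP, whereas \propo{suc-cit} would force that very image to be left-iterable --- a contradiction. (If $M=\set{0}$ then $L$ is a BLIP sublanguage of $0^{*}$, hence finite, and $\addconst{w}(L)$ is finite, so BLIP by \lemme{blip}; we may thus assume $M$ contains a positive element.)

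The heart of the argument is the construction of $\kappa$. Fix a positive $g\in M$; then $ng\in M$ for every $n$, since $M$ is closed under addition. Writing $g$ and $\val{w}$ over a common denominator $q^{k}$, the difference $ng-\val{w}$ has the form $z_{n}/q^{k}$ with $z_{n}\in\Z$ and $z_{n}\to+\infty$. By \lemme{thr-esh} applied to this fixed $k$, for $n$ large enough the numerator exceeds the threshold, so that $\kappa:=ng$ satisfies simultaneously $\kappa\in M$, $\kappa\geq\val{w}$ and $(\kappa-\val{w})\in\Vpq$. I then set $w'=\cod{\kappa-\val{w}}$, an honest word of $\Apqe$, so that $\addconst{w'}\circ\addconst{w}$ realises the addition of $\kappa$.

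It remains to chase values through the composition. For every $x\in L$ one has $\val{x}\in M$, hence
\[
  \val{\addconst{w'}(\addconst{w}(x))}=\val{x}+\val{w}+(\kappa-\val{w})=\val{x}+\kappa\in M,
\]
the membership holding because $\kappa\in M$ and $M$ is closed under addition. By uniqueness of representations, $\addconst{w'}(\addconst{w}(x))$ is, up to leading zeros, the canonical representation $\cod{\val{x}+\kappa}$ of an element of $M$, so $\addconst{w'}(\addconst{w}(L))$ is (up to leading zeros) a sublanguage of $L$ and is therefore BLIP by \lemme{blip}. On the other hand, if $\addconst{w}(L)$ were left-iterable, then \propo{suc-cit} applied to the incrementer $\addconst{w'}$ would make $\addconst{w'}(\addconst{w}(L))$ left-iterable, contradicting the previous sentence; hence $\addconst{w}(L)$ is BLIP. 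The main obstacle is precisely this construction of $\kappa$: one must push the numerator of $\kappa-\val{w}$ past the ineffective bound of \lemme{thr-esh} while keeping its denominator fixed, which is exactly why $\kappa$ is sought among the multiples of a single generator of $M$. A minor technical point, absorbed into the ``up to leading zeros'' clauses, is that an incrementer may pad its output with high-order zeros; this never affects BLIP-ness, since a BLIP language can contain only boundedly many words of the form $0^{i}$.
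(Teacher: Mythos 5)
Your proof is correct and follows essentially the same route as the paper's: both construct a second incrementer $\addconst{w'}$ so that the composite $\addconst{w'}\circ\addconst{w}$ adds a constant lying in the submonoid, observe that the composite image is then (up to leading zeros) a BLIP sublanguage of $L$, and conclude via \propo{suc-cit}. The only divergence is cosmetic: you obtain the constant as a large multiple $ng$ of a fixed positive element of $M$ and use \lemme{thr-esh} to put $ng-\val{w}$ into $\Vpq$, whereas the paper picks $j$ in a suitable residue class modulo $m\xmd q^{k}$ for an $m$ with $m\xmd\N\subseteq\val{L}$ --- and both arguments gloss over the leading-zero bookkeeping to about the same degree.
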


\begin{proof}
  Since~$\val{L}$ is an additive submonoid of~$\Vpq$, it contains~$m\xmd\N$ for 
  some $m$ (as it must contains some number~$\frac{m}{q^l}$ for some~$m$ and~$l$).

Let~$n$ and~$k$ be the integers such that~$\val{w}=\frac{n}{q^k} = x$.
From \lemme{thr-esh}, it follows that there exists~$m_k$ such that for every~$j>m_k$, 
$\frac{j}{q_k}$ is in $\Vpq$. 
In particular, there exists~$j$ such that~$n+j \equiv 0 \mod (m\xmd q^k)$ 
and~$\frac{j}{q^k}$ is in~$\Vpq$. If we denote by~$y=\frac{j}{q^k}$, it means
that~$(x+y)$ is in~$m\xmd\N$. Hence, $\val{L}+x+y$ is contained in~$\val{L}$. 

Let us denote by~$u=\pqRep{y}$, and~$L'=\addconst{w}(L)$.

It follows that  $\val{\addconst{u}(L')} = (\val{L}+x+y) \subseteq{\val{L}}$, 
hence that~$\addconst{u}(L')$ is an
infinite subset of~$L$, and as such BLIP (from \rlemma{blip}).
If~$L'$ were left-iterable, so would be $\addconst{u}(L')$ by
\propo{suc-cit}, a contradiction.
\end{proof}

Finally we prove a property of finitely generated submonoids of~$\Vpq$.

\begin{proposition}
\label{p.fin-gen}
Let~$M$ be a finitely generated additive submonoid of~$\Vpq$.
There exists a finite family~$\{g_{i}\}_{i\in I}$ of elements
of~$\Vpq$ such
that~$M$ is contained \linebreak in~${\Cup_{i\in I} (g_{i}+\N)}$.
\end{proposition}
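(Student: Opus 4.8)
The plan is to exploit the arithmetic rigidity of $\Vpq$: every element is a rational whose denominator divides a power of~$q$, so a \emph{finitely} generated submonoid is confined to $\frac{1}{q^K}\N$ for a single exponent~$K$, after which the finitely many residues modulo~$q^K$ supply the translates. First I would fix generators $x_1,\dots,x_r\in\Vpq$, so that $M$ is the set of all nonnegative integer combinations $\sum_{j} n_j x_j$. As recalled before \lemme{thr-esh}, each $x_j\in\Q$ has denominator dividing a power of~$q$, and $x_j\ge 0$ because $\val{u}\ge 0$ for every word~$u$; hence $x_j=\frac{a_j}{q^{k_j}}$ for some $a_j\in\N$ and $k_j\in\N$. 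Putting $K=\max_j k_j$ and clearing denominators, each $x_j$ is a nonnegative integer multiple of~$\frac{1}{q^K}$, and therefore so is every element of~$M$:
\begin{equation}
  M\ \subseteq\ \frac{1}{q^K}\,\N\eqpnt
  \notag
\end{equation}

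Next I would partition $M$ by the residue modulo~$q^K$ of the numerator. For $j\in\set{0,1,\dots,q^K-1}$ set $M_j=\Defi{x\in M}{q^K x\equiv j~[q^K]}$; this is a partition of $M$ into finitely many blocks, and each nonempty block satisfies $M_j\subseteq \frac{j}{q^K}+\N$. The progression $\frac{j}{q^K}+\N$ is order-isomorphic to~$\N$, so the nonempty set $M_j$ has a least element $g_j=\min M_j$, and every element of $M_j$ differs from $g_j$ by a nonnegative integer, so that $M_j\subseteq g_j+\N$. Let $I$ be the (finite) set of those~$j$ for which $M_j\neq\es$. Since $M=\Cup_{i\in I}M_i$, we obtain
\begin{equation}
  M\ \subseteq\ \Cup_{i\in I}\bigl(g_i+\N\bigr)\eqvrg
  \notag
\end{equation}
with each $g_i=\min M_i\in M\subseteq\Vpq$, which is exactly the assertion.

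The one delicate point is the clause \emph{of elements of~$\Vpq$}: a class is naturally centred at the formal residue $\frac{j}{q^K}$, but such a small multiple of $\frac{1}{q^K}$ need not be representable --- by \lemme{thr-esh} only sufficiently large multiples are guaranteed to lie in~$\Vpq$. Taking $g_j$ to be the genuine minimum of $M_j$ rather than the formal residue removes this obstacle at once, since that minimum belongs to~$M$ and hence to~$\Vpq$ by construction; this is also why \lemme{thr-esh} is not actually invoked in the argument. The only structural inputs used are that finite generation forces a common power-of-$q$ denominator and that the resulting numerators are nonnegative, so that the discrete progressions above each have a well-defined least element.
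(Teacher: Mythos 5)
Your proof is correct and follows essentially the same route as the paper's: clear denominators using the maximal exponent among the generators to get $M\subseteq\frac{1}{q^K}\N$, split into the $q^K$ residue classes, and take a least element of each class as the base of a translate of~$\N$. The only (harmless) difference is that you take the minimum over $M_j$ itself rather than over $\Vpq\cap(\N+\frac{j}{q^K})$ as the paper does, which if anything makes the membership of the $g_i$ in $\Vpq$ more immediate.
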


\begin{proof}

Let $\{y_1,y_2,\ldots,y_h\}$ be a generating family of~$M$.
Every $y_j$ is in~$\Vpq$ and it is then a rational
number~$\frac{n_j}{q^{k_j}}$  for some integers~$n_j$ and~$k_j$.
Let~$k$ be the largest of the~$k_{j}$.
Hence, every element in~$M$ is a rational number whose denominator is a
divisor of~$q^k$, and thus
$\msp M\subseteq \Vpq\cap\left(\frac{1}{q^k}\N\right)$.

Since every number in~$\frac{1}{q^k}\xmd\N$ can be
written as~$n+\frac{i}{q^k}$ for some $n$ in~$\N$
and some~$i$ in~$\set{{0},{1},\ldots,{q^k-1}}$,
it follows that $\frac{1}{q^k}\N = \Cup_{0\leq i < q^k} (\N +\frac{i}{q^k})$,
hence~${M\subseteq \Cup_{0\leq i < q^k} ( \Vpq\cap (\N + \frac{i}{q^k}))}$.
Besides, for every~$i$ in~$\set{{0},{1},\ldots,{q^k-1}}$, we denote by~$g_i$ 
the smallest number in~$\Vpq \cap (\N + \frac{i}{q^k})$.
Then, and since $\Vpq + \N=\Vpq$, for 
every~$i$,~$\Vpq\cap (\N + \frac{i}{q^k})= m_i+\N$.
Hence $M\subseteq \Cup_{0\leq i < q^k} (\N + m_i)$.%
\end{proof}

Even though this proposition seems rather weak (it is a poor
approximation from above), it is enough: it indeed reduces~\theor{blip-fin-gen}
to proving
that~$\cod{n+\N}$ (or equivalently~$\addconst{w}(\Lpq)$) is BLIP for
any~$n$, which was proven in \rproposition{bli-ima-bli}.

\begin{proof}[of \theor{blip-fin-gen}]
Let~$M$ be a finitely generated additive submonoid of~$\Vpq$.
By \propo{fin-gen}, there exists a finite family~$\{m_{i}\}_{i\in I}$
of elements of~$\Vpq$ such
that~$M \subseteq\Cup_{i\in I} (m_{i}+\N)$.

Let~$L=\pqRep{M}$ the language of the $\pq$-representations of the
elements of~$M$ and write~$w_{i}=\pqRep{m_{i}}$.
Hence, $L$ is contained in $(\Cup_{i}\addconst{w_i}(\Lpq))$, and thus
BLIP by \rlemma{blip}.
\end{proof}

\section{Conclusion and future work}

In this work, we have defined a new property, in an effort to capture
the structural complexity of~$\Lpq$.
This property contradicts any form of pumping lemma, placing~$\Lpq$
outside the scope of classical language theory.
Even more so that every other example of BLIP languages we describe seem 
to be purely artificial (\cf\rexample{blip})

Paradoxically, \theor{blip-fin-gen} shows that such examples are
very common within a rational base number system.
It seems that every reasonable number set is represented by a BLIP language
and that every simple language represents a complicated set of numbers.

%
%
%
%
\medskip

This work led us to a conjecture about rational
approximations of~$\Lpq$:

\begin{conjecture}\lproblem{lpq-approx}
  Let~$L$ be a rational language closed by addition and containing~$\Lpq$.
  Then~$L$ contains~$X.A_p^*$ where~$X=\Lpq\cap A_p^{\leq k}$, for some~$k$.
\end{conjecture}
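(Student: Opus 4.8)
Since this is the paper's open conjecture, what follows is a line of attack rather than a finished argument. Throughout I read ``closed by addition'' as: $S=\val{L}$ is an additive submonoid of~$\Vpq$ and $L=\pqRep{S}$ is its language of canonical representations, so that $\Lpq\subseteq L$ is exactly $\N\subseteq S$. Because $L$ is rational it has finitely many left quotients, and the conclusion $X\Apqe\subseteq L$ is equivalent to the assertion that, for $x$ in the relevant slice of~$\Lpq$, the quotient $x^{-1}L$ equals the full language~$\Apqe$; equivalently, that $\val{x\xmd y}\in S$ for every $y$ in~$\Apqe$. The plan is therefore to determine which $\pq$-representations $x$ yield a \emph{universal} (accept-everything) quotient, to show that these are exactly the representations of integers beyond a fixed value, and hence that the non-universal quotients occupy only a bounded-depth part of the tree of \rfigure{l32}, from which the finite slice $X=\Lpq\cap\Apq^{\leq k}$ can be read off.

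The first step is to make membership arithmetic. For $x=\cod{n}$ and $y\in\Apq^{\,j}$ one has $\val{x\xmd y}=n(\pq)^{j}+\val{y}=(n\,p^{\,j}+c(y))/q^{\,j}$, where $c(y)=\sum_{i} b_i\,p^{\,i}q^{\,j-1-i}\in\N$. I would introduce the level monoids $S_j=\{\,M\in\N \mid M/q^{\,j}\in S\,\}$, which are submonoids of~$(\N,+)$ containing $q^{\,j}\N$ since $\N\subseteq S$; by the numerical-semigroup theorem each $S_j$ contains every multiple of its $\gcd g_j$ beyond a Frobenius threshold $F_j$, and $g_j$ divides~$q^{\,j}$. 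Then $x^{-1}L=\Apqe$ means $n\,p^{\,j}+c(y)\in S_j$ for all $j$ and all~$y$. The key leverage is the coprimality $\gcd(p,q)=1$: the weights $p^{\,i}q^{\,j-1-i}$ are setwise coprime, so as $y$ varies $c(y)$ sweeps every residue class modulo~$q^{\,j}$, hence modulo~$g_j$. Since $S_j\subseteq g_j\N$, universality of even a single node forces $g_j=1$ for all~$j$; and once $g_j=1$, universality of~$\cod{n}$ collapses to the family of inequalities $n\,p^{\,j}+c(y)\ge F_j$. Everything then hinges on comparing the thresholds $F_j$ with the powers $p^{\,j}$: if $F_j=O(p^{\,j})$, every $\cod{n}$ with $n$ large enough is universal.

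The decisive and hardest step is to produce one full cone and to control the thresholds uniformly, playing the finite-state structure of~$L$ against the $q$-adic branching of~$\Lpq$. I would run an infinite branch of~$\Lpq$ through the minimal automaton of~$L$; by finiteness it eventually recurs among a fixed set of quotients. I would then argue that a non-universal quotient cannot recur along~$\Lpq$ at unbounded depth: using the incrementers $\addconst{w}$ of~\rsection{incrementer} together with \rproposition{suc-cit} and the reasoning of \rproposition{bli-ima-bli}, a recurring non-universal quotient would let one carve out of~$S$ a finitely generated additive sub-structure whose language is, by \rtheorem{blip-fin-gen}, BLIP; but an infinite rational language is left-iterable and hence never BLIP, a contradiction. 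This should confine the non-universal quotients to bounded depth and, through the same recurrence, yield $F_j=O(p^{\,j})$, after which \rlemma{thr-esh} supplies the missing values of~$S$ and the slice $X$ emerges.

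The main obstacle is exactly this last reconciliation. A naive pumping argument is doomed: the branches of~$\Lpq$ are aperiodic — this is precisely why $\Lpq$ is BLIP — so revisiting a state of the automaton does \emph{not} yield iterable prefixes of~$\Lpq$, and the iteration needed for a contradiction must be manufactured through the additive closure rather than read off the word combinatorics. Equally delicate is obtaining a bound $F_j=O(p^{\,j})$ that is \emph{uniform in the branch}: \rlemma{thr-esh} gives only, non-constructively, that each $F_j$ is finite, and the remark following it warns that the estimate is far from tight. Turning that qualitative finiteness into the quantitative comparison with~$p^{\,j}$, uniformly across the tree, is in my view where the genuine difficulty — the one that keeps this statement a conjecture — resides.
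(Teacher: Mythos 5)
This statement is the paper's closing conjecture: the paper offers no proof of it, so there is nothing to compare your text against, and you are right to present what you wrote as a programme rather than a proof. The arithmetic reduction you set up is sound as far as it goes: the identity $\val{x\xmd y}=(n\xmd p^{\,j}+c(y))/q^{\,j}$ is correct, the weights $p^{\,i}q^{\,j-1-i}$ are indeed setwise coprime, and a short induction on~$j$ (peel off the last digit, write $c(y)=p\xmd c(y')+b_{0}\xmd q^{\,j-1}$, and use $p>q$) confirms your claim that $c(y)$ sweeps every residue class modulo~$q^{\,j}$. Reformulating the conclusion as ``all sufficiently deep quotients of~$L$ along~$\Lpq$ are universal'' is also the right way to bring the finitely many left quotients of~$L$ into play.

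That said, the programme has three genuine gaps, only one of which you name. First, the step ``universality of a single node forces $g_{j}=1$'' only extracts a necessary condition from the desired conclusion; nothing in your sketch derives $g_{j}=1$ from the hypotheses, and it can fail in the absence of rationality (for $S=\N$ one has $S_{j}=q^{\,j}\N$, hence $g_{j}=q^{\,j}$), so rationality must enter precisely here and currently does not. Second, the proposed contradiction --- a recurring non-universal quotient lets one ``carve out'' of~$S$ a finitely generated additive sub-structure whose language is BLIP by \rtheorem{blip-fin-gen}, contradicting rationality --- is not constructible as stated: the representation language of a finitely generated submonoid is BLIP and therefore never an infinite rational language, so it cannot sit inside the rational~$L$ \emph{as a rational sublanguage}, and merely finding a BLIP sublanguage of~$L$ contradicts nothing (\rlemma{blip}~(iv) says sublanguages of BLIP languages are BLIP; it does not say sublanguages of rational languages are left-iterable). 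The contradiction would have to be rearranged so that the iterable object is a quotient, an intersection, or an image under~$\addconst{w}$ that is provably rational, and the sketch does not say how. Third is the gap you do identify: the uniform bound $F_{j}=O(p^{\,j})$ across all branches, for which \rlemma{thr-esh} gives no quantitative handle. In short, a reasonable opening, honestly flagged as incomplete, but each of the three places where the rationality of~$L$ has to do real work remains open --- which is presumably why the statement is a conjecture.
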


Any approximation of~$\Lpq$ by a rational language~$L$,
would only keep a finite part of the structure:
the automaton accepting~$L$ would be the subtree of depth~$k$
of~$\Lpq$ whose leaves are all-accepting states.
\rfigure{conjecture} gives two examples of rational approximation of~$L_\frac{3}{2}$,
respectively when the~$\Lpq$ is cut at depth~$k=2$ and~$k=5$.

\begin{figure}
  \centering
  \scalebox{0.95}{\input{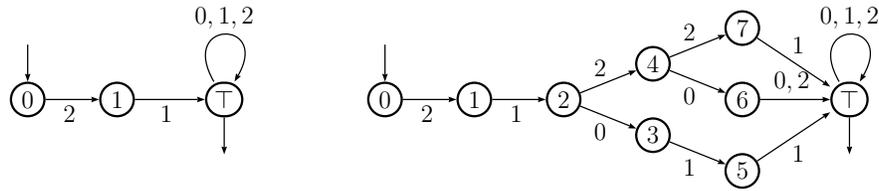}}
  \caption{Two rational approximations of~$L_{\frac{3}{2}}$}
  \lfigure{conjecture}
\end{figure}


%
%
%

\vfill
\bibliographystyle{plain}
\bibliography{%
  bibliography.bib,%
  Alexandrie-abbrevs,%
  Alexandrie-AC,%
  Alexandrie-DF,%
  Alexandrie-GL,%
  Alexandrie-MR,%
  Alexandrie-SZ%
}
\end{document}